\documentclass[letterpaper, 10pt, conference]{ieeeconf}
\usepackage[autostyle]{csquotes}
\usepackage{bm}
\usepackage{amssymb}
\usepackage{epsfig}
\usepackage{subfigure}
\usepackage{multicol, blindtext}
\usepackage{amsmath}
\usepackage{graphicx}
\usepackage{epstopdf}
\usepackage{float}
\usepackage{amsfonts}
\usepackage{color}
\usepackage{dblfloatfix}
\usepackage{multirow}%
\usepackage{fourier} 
\usepackage{array}
\usepackage{makecell}

\setcounter{MaxMatrixCols}{30}
\setlength{\textfloatsep}{1pt}
\providecommand{\U}[1]{\protect\rule{.1in}{.1in}}

\newtheorem{theorem}{\rm\textbf{Theorem}}

\IEEEoverridecommandlockouts
\overrideIEEEmargins
\begin{document}

\title{{\LARGE \textbf{Optimal Control of Connected Automated Vehicles with Event/Self-Triggered Control Barrier Functions }}
\thanks{This work was supported in part by NSF under grants ECCS-1931600,
DMS-1664644, CNS-1645681, CNS-2149511, by AFOSR under grant FA9550-19-1-0158,
by ARPA-E under grant DE-AR0001282, by the MathWorks, and by NPRP grant
(12S-0228-190177) from the Qatar National Research Fund, a member of
the Qatar Foundation (the statements made herein are solely the responsibility
of the authors).
The authors are with the Division of Systems Engineering and Center for
Information and Systems Engineering, Boston University, Brookline, MA,
02446, USA, CSAIL, MIT, USA and Electrical Engineering Department,Qatar University, Doha, Qatar \texttt{{\small \{esabouni, cgc\}@bu.edu, \{weixy@mit.edu\}, \{nader.meskin@qu.edu.qa\}}}}}
\author{Ehsan Sabouni, Christos G. Cassandras, Wei Xiao and Nader Meskin}
\maketitle

\begin{abstract}
We address the problem of controlling Connected and Automated Vehicles (CAVs) in conflict areas of a traffic network subject to hard safety constraints. It has been shown that such problems can be solved through a combination of tractable optimal control problem formulations and the use of Control Barrier Functions (CBFs) that guarantee the satisfaction of all constraints. These solutions can be reduced to a sequence of Quadratic Programs (QPs) which are efficiently solved on-line over discrete time steps. However, the feasibility of each such QP cannot be guaranteed over every time step. To overcome this limitation, we develop both an event-triggered approach and a self-triggered approach such that the next QP is triggered by properly defined events. We show that both approaches, each in a different way, eliminate infeasible cases due to time-driven inter-sampling effects, thus also eliminating the need for selecting the size of time steps.
Simulation examples are included to compare the two new schemes and to illustrate how overall infeasibilities can be significantly reduced while at the same time reducing the need for communication among CAVs without compromising performance.
\end{abstract}

\thispagestyle{empty} \pagestyle{empty}


\section{INTRODUCTION}

The emergence of Connected and Automated Vehicles (CAVs) along with new traffic infrastructure technologies \cite{li2013survey},\cite{9625017} over the past decade have brought the promise of resolving long-lasting problems in transportation networks such as accidents, congestion, and unsustainable energy consumption along with environmental pollution \cite{deWaard09},\cite{Schrank20152015UM},\cite{kavalchuk2020performance}. Meeting this goal heavily depends on effective traffic management, specifically at the bottleneck points of a transportation network such as intersections, roundabouts, and merging roadways \cite{VANDENBERG201643}. 

To date, both centralized and decentralized methods have been proposed to tackle the control and coordination problem of CAVs in conflict areas; an overview of such methods may be found in \cite{7562449}. Platoon formation \cite{xu2019grouping}, \cite{wu2013mathematical}, \cite{rajamani2000demonstration} and reservation-based methods \cite{1373519},\cite{au2010motion},\cite{zhang2013analysis} are among the centralized approaches, which are limited by the need for powerful central computation resources and are typically prone to disturbances and security threats.
In contrast, in decentralized methods each CAV is responsible for its own on-board computation with information from other vehicles limited to a set of neighbors \cite{7313484}. Constrained optimal control problems can then be formulated with objectives usually involving minimizing acceleration or maximizing passenger comfort (measured as the acceleration derivative or jerk), or jointly minimizing travel time through conflict areas and energy consumption. These problems can be analytically solved in some cases, e.g., for optimal merging \cite{XIAO2021109333} or crossing
a signal-free intersection \cite{Zhang2018}.
However, obtaining such solutions becomes computationally prohibitive for real-time applications when an optimal trajectory involves multiple constraints becoming active. Thus, on-line control methods such as Model Predictive Control (MPC) techniques or Control Barrier function (CBFs) are often adopted for the handling of additional constraints.

In the MPC approach proposed in  \cite{garcia1989model}, the time is normally discretized and an optimization problem is solved at each time instant with the addition of appropriate inequality constraints; then, the system dynamics are updated. Since both control and state are considered as the decision variables in the optimization problem, MPC is very effective for problems with simple (usually linear or linearized) dynamics, objectives, and constraints \cite{cao2015cooperative}. Alternatively, CBFs \cite{Xiao2019} \cite{CBF_QP(2017)} can overcome some shortcomings of the MPC method \cite{mukai2017model} as they do not need states as decision variables, instead mapping state constraints onto new ones that only involve the decision variables in a linear fashion. Moreover, CBFs can be used with nonlinear (affine in control) system dynamics and they
have a crucial forward invariance property which guarantees the satisfaction of safety constraints over all time as long as these constraints are initially satisfied.



An approach combining optimal control solutions with CBFs was recently presented in \cite{XIAO2021109592}. In this combined approach (termed OCBF), the solution of an \emph{unconstrained} optimal control problem is first derived and used as a reference control. Then, the resulting control reference trajectory is optimally tracked subject to
a set of CBF constraints which ensure the satisfaction of all constraints of the original optimal control problem. Finally, this optimal tracking problem is efficiently solved by discretizing time and solving a simple Quadratic Problem (QP) at each discrete time step over which the control input is held constant \cite{CBF_QP(2017)}. The use of CBFs in this approach exploits their forward invariance property to guarantee that all constraints they enforce are satisfied at all times if they are initially satisfied. In addition, CBFs are designed to impose \emph{linear} constraints on the control which is what enables the efficient solution of the tracking problem through a sequence of QPs. This approach can also be shown to provide additional flexibility in terms of using nonlinear vehicle dynamics (as long as they are affine in the control), complex objective functions, and tolerate process and measurement noise \cite{XIAO2021109592}. 

However, in solving a sequence of QPs the control update interval in the time discretization process must be sufficiently small in order to always guarantee that every QP is feasible. In practice, such feasibility can be often seen to be violated due to the fact that it is extremely difficult to pick a proper discretization time which can be guaranteed to always work. In this paper, an \emph{event-triggered}  and a \emph{self-triggered} approach are considered as two solutions to remedy this issue. We note that the idea of synthesizing event-triggered controllers and BFs or Lyapunov functions has been used in \cite{ong2018event} with the goal of improving stability, while
a unified event-driven scheme is proposed in \cite{taylor2020safety} with an Input-to-State barrier function to impose safety under an input disturbance. 

The contribution of this paper is to replace the \emph{time-driven} nature of the discretization process used in the OCBF approach which involves a sequence of QPs by an \emph{event-driven} mechanism, hence achieving QP feasibility independent of a time step choice. In the event-triggering scheme, given the system state at the start of a given QP instance, we extend the approach introduced in \cite{Xiao2021EventTriggeredSC} for a multi-agent system to define events associated with the states of CAVs reaching a certain bound, at which point the next QP instance is triggered. 
On the other hand, in the self-triggering scheme, we provide a minimum inter-event time guarantee by predicting the first time instant that any of the CBF constraints in the QP problem is violated, hence we can determine the triggering time for the next QP instance. 
Both methods provide a guarantee for the forward invariance property of CBFs and eliminate infeasible cases due to time-driven inter-sampling effects (additional infeasibilities are still possible due to potentially conflicting constraints within a QP; this separate issue has been addressed in \cite{XIAO2022inf}). 

The advantages of these event-driven schemes can be summarized as follows:
$(i)$ Infeasible QP instances due to inter-sampling effects are eliminated,
$(ii)$ There is no longer a need to determine a proper time step size required in the time-driven methods,
$(iii)$ The number of control updates under event-driven schemes is generally reduced, thereby reducing the overall computational cost, and
$(iv)$ Since the number of QPs that need to be solved is reduced, this also reduces 
the need for unnecessary communication among CAVs. This reduced need for communication, combined with the unpredictability of event-triggering relative to a fixed time discretization approach, results in the system being less susceptible to malicious attacks.

The paper is organized as follows. In Section II, we provide an overview of the decentralized constrained optimal control for CAVs in any conflict area setting, along with a brief review of CBFs to set the stage for the OCBF approach. We also review the time-driven approach for solving such optimal control problems, motivating the proposed solutions to the problem. In Section III, both methods are separately presented, including the formulation and solution of QPs in both frameworks. In Section V, simulation results compare time-driven, event-triggered, and self-triggered schemes, in terms of their performance metrics, computational load, and infeasible cases to show how constraint violations can be reduced through the proposed approaches.

\section{Problem Formulation and Time-Driven Control Solutions}

\label{sec:problem}
In this section, we review the setting for CAVs whose motion is cooperatively controlled at conflict areas of a traffic network. This includes merging roads, signal-free intersections, roundabouts, and highway segments where lane change maneuvers take place. We define a Control Zone (CZ) to be an area within which CAVs can communicate with each other or with a coordinator (e.g., a Road-Side Unit (RSU)) which is responsible for facilitating the exchange of information (but not control individual vehicles) within this CZ. As an example, Fig. \ref{fig:merging} shows a conflict area due to vehicles merging from two single-lane roads and there is a single Merging Point (MP) which vehicles must cross from either road \cite{XIAO2021109333}. 

In such a setting, assuming all traffic consists of CAVs, a finite horizon constrained optimal control problem can be formulated aiming to determine trajectories that jointly minimize travel time and energy consumption through the CZ while also ensuring passenger comfort (by minimizing jerk or centrifugal forces) and guaranteeing safety constraints are always satisfied. 
Let $F(t)$ be the set of indices of all CAVs located in the
CZ at time $t$. A CAV enters the CZ at one of several origins (e.g., $O$ and $O'$ in Fig. \ref{fig:merging}) and leaves at one of possibly several exit points (e.g., $M$ in Fig. \ref{fig:merging}). 
The index $0$ is used to denote a CAV that has just left the CZ. Let $N(t)$ be the
cardinality of $F(t)$. Thus, if a CAV arrives at time $t$, it is assigned the
index $N(t)+1$. All CAV indices in $F(t)$ decrease by one when a CAV passes over
the MP and the vehicle whose index is $-1$ is dropped.

The vehicle dynamics for each CAV $i\in F(t)$ along the lane to which it
belongs in a given CZ are assumed to be of the form
\begin{equation} \label{VehicleDynamics}
\left[
\begin{array}
[c]{c}%
\dot{x}_{i}(t)\\
\dot{v}_{i}(t)
\end{array}
\right]  =\left[
\begin{array}
[c]{c}%
v_{i}(t)\\
u_{i}(t)
\end{array}
\right],  
\end{equation}
where $x_{i}(t)$ denotes the distance from the origin at which CAV $i$ arrives, $v_{i}(t)$ denotes the velocity, and $u_{i}(t)$ denotes the control input (acceleration).
There are two objectives for each CAV, as detailed next.\\
\begin{figure}[H]
\vspace{-5mm}
\centering
$\hspace{-4mm}$\includegraphics[scale=0.8]{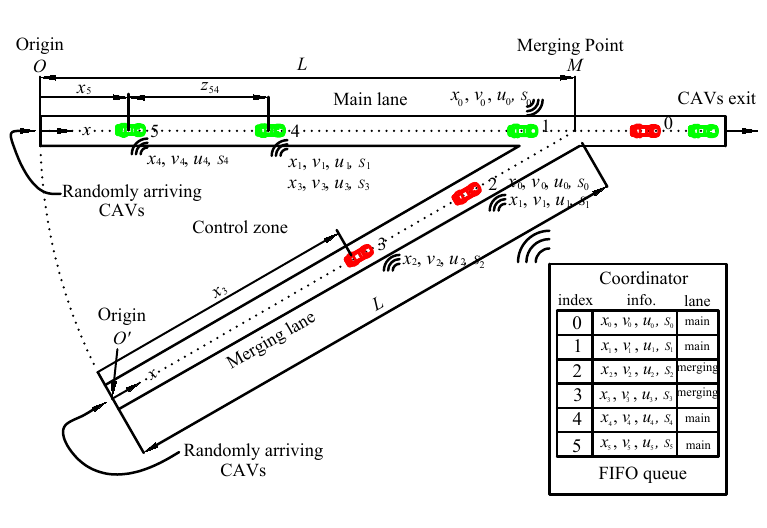} \caption{The merging problem}%
\label{fig:merging}%
\end{figure}
\noindent {\bf Objective 1} (Minimize travel time): Let $t_{i}^{0}$ and $t_{i}^{f}$
denote the time that CAV $i\in F(t)$ arrives at its origin
and leaves the CZ at its exit point, respectively. We wish to minimize the travel time
$t_{i}^{f}-t_{i}^{0}$ for CAV $i$.\\
{\bf Objective 2} (Minimize energy consumption): We also wish to minimize
the energy consumption for each CAV $i$:
\begin{equation}
J_{i}(u_{i}(t),t_{i}^{f})=\int_{t_{i}^{0}}^{t_{i}^{f}}\mathcal{L}_i(|u_{i}(t)|)dt,
\end{equation}
where $\mathcal{L}_i(\cdot)$ is a strictly increasing function of its argument.\\
{\bf Constraint 1} (Safety constraints): Let $i_{p}$ denote the index of
the CAV which physically immediately precedes $i$ in the CZ (if one is
present). We require that the distance $z_{i,i_{p}}(t):=x_{i_{p}}(t)-x_{i}(t)$
be constrained by:
\begin{equation}
z_{i,i_{p}}(t)\geq\varphi v_{i}(t)+\delta,\text{ \ }\forall t\in\lbrack
t_{i}^{0},t_{i}^{f}], \label{Safety}%
\end{equation}
where $\varphi$ denotes the reaction time (as a rule, $\varphi=1.8s$ is used,
e.g., \cite{Vogel2003}) and $\delta$ is a given minimum safe distance.
 If we define $z_{i,i_{p}}$ to be the distance from
the center of CAV $i$ to the center of CAV $i_{p}$, then $\delta$ depends on the length of these two CAVs (generally dependent on
$i$ and $i_{p}$ but taken to be a constant over all CAVs for simplicity).\\
{\bf Constraint 2} (Safe merging): Whenever a CAV crosses a MP, a lateral collision is possible and there must be adequate safe space for the CAV at this MP to avoid such collision, i.e.,
\begin{equation}
\label{SafeMerging}
z_{i,i_c}(t_{i}^{m})\geq\varphi v_{i}(t_{i}^{m})+\delta,
\end{equation}
where $i_c$ is the index of the CAV that may collide with CAV $i$ at merging point $m=\lbrace 1,...,n_i \rbrace$ where $n_i$ is the total number of MPs that CAV $i$ passes in the CZ. The determination of CAV $i_c$ depends on the policy adopted for sequencing CAVs through the CZ, such as First-In-First-Out (FIFO) based on the arrival times of CAVs, or any other desired policy. It is worth noting that this constraint only applies at a certain time $t_{i}^{m}$ which obviously depends on how the CAVs are controlled. As an example, in Fig. \ref{fig:merging} under FIFO, we have $i_c=i-1$ and $t_i^m=t_i^f$ since the MP defines the exit from the CZ.\\
{\bf Constraint 3} (Vehicle limitations): Finally, there are constraints
on the speed and acceleration for each $i\in F(t)$:
\begin{equation}
\begin{aligned} v_{\min} \leq v_i(t)\leq v_{\max}, \forall t\in[t_i^0,t_i^f]\end{aligned} \label{VehicleConstraints1}%
\end{equation}
\begin{equation}
\begin{aligned} u_{\min}\leq u_i(t)\leq u_{\max}, \forall t\in[t_i^0,t_i^f],\end{aligned} \label{VehicleConstraints2}%
\end{equation}
where $v_{\max}> 0$ and $v_{\min} \geq 0$ denote the maximum and minimum speed allowed
in the CZ for CAV $i$, $u_{{\min}}<0$ and $u_{\max}>0$ denote the minimum and maximum
control for CAV $i$, respectively.\\
\textbf{Optimal Control Problem formulation.} Our goal is to determine a control law achieving objectives 1-2 subject to constraints 1-3 for each $i \in F(t)$ governed by the dynamics (\ref{VehicleDynamics}). Choosing $\mathcal{L}_i(u_i(t))=\frac{1}{2}u_i^2(t)$ and normalizing travel time and $\frac{1}{2}u_{i}^{2}(t)$, we use the weight $\alpha\in[0,1]$ to construct
a convex combination as follows:
\begin{equation}\label{eqn:energyobja_m}
\begin{aligned}\min_{u_{i}(t),t_i^f} J_i(u_i(t),t_i^f)= \int_{t_i^0}^{t_i^f}\left(\alpha + \frac{(1-\alpha)\frac{1}{2}u_i^2(t)}{\frac{1}{2}\max \{u_{\max}^2, u_{\min}^2\}}\right)dt \end{aligned}.
\end{equation}
 Letting $\beta:=\frac{\alpha\max\{u_{\max}^{2},u_{\min}^{2}\}}{2(1-\alpha)}$, we obtain a simplified form: 
\begin{equation}\label{eqn:energyobja}
\min_{u_{i}(t),t_i^f}J_{i}(u_{i}(t),t_i^f):=\beta(t_{i}^{f}-t_{i}^{0})+\int_{t_{i}^{0}%
}^{t_{i}^{f}}\frac{1}{2}u_{i}^{2}(t)dt,
\end{equation}
where $\beta\geq0$ is an adjustable weight to penalize
travel time relative to the energy cost. Note that the solution is \emph{decentralized} in the sense that CAV $i$ requires information only from CAVs $i_p$ and $i_c$ required in (\ref{Safety}) and (\ref{SafeMerging}).

Problem (\ref{eqn:energyobja}) subject to (\ref{VehicleDynamics}), (\ref{Safety}), (\ref{SafeMerging}), (\ref{VehicleConstraints1})
and (\ref{VehicleConstraints2}) can be analytically solved in some cases, e.g., the merging problem in Fig. \ref{fig:merging}
\cite{XIAO2021109333}
and a signal-free intersection 
\cite{Zhang2018}.
However, obtaining solutions for real-time applications becomes prohibitive when an optimal trajectory involves multiple constraints becoming active. This has motivated an approach which combines a solution of the unconstrained problem (\ref{eqn:energyobja}), which can be obtained very fast, with the use of Control Barrier Functions (CBFs) which provide guarantees that (\ref{Safety}), (\ref{SafeMerging}), (\ref{VehicleConstraints1}) and (\ref{VehicleConstraints2}) are always satisfied through constraints that are linear in the control, thus rendering solutions to this alternative problem obtainable by solving a sequence of computationally efficient QPs. This approach is termed Optimal Control with Control Barrier Functions (OCBF) \cite{XIAO2021109592}.

\textbf{The OCBF approach.} The OCBF approach consists of three steps: $(i)$ the solution of the \emph{unconstrained} optimal control problem (\ref{eqn:energyobja}) is used as a reference control, $(ii)$ the resulting control reference trajectory is optimally tracked subject to the constraint \eqref{VehicleConstraints2}, as well as a set of CBF constraints enforcing (\ref{Safety}), (\ref{SafeMerging}) and \eqref{VehicleConstraints1}. $(iii)$ This optimal tracking problem is efficiently solved by discretizing time and solving a simple QP at each discrete time step. The significance of CBFs in this approach is twofold: first, their forward invariance property \cite{XIAO2021109592} guarantees that all constraints they enforce are satisfied at all times if they are initially satisfied; second, CBFs impose \emph{linear} constraints on the control which is what enables the efficient solution of the tracking problem through the sequence of QPs in $(iii)$ above.

The reference control in step $(i)$ above is denoted by $u_{i}^{\textrm{ref}}(t)$. The unconstrained solution to (\ref{eqn:energyobja}) is denoted by $u_i^*(t)$, thus we usually set $u_{i}^{\textrm{ref}}(t)=u_i^*(t)$. However, $u_{i}^{\textrm{ref}}(t)$ may be chosen to be any desired control trajectory and, in general, we use $u_{i}^{\textrm{ref}}(t)=h(u_i^*(t),x_i^*(t), \textbf{x}_i(t))$ where $\textbf{x}_i(t)\equiv(x_i(t),v_i(t)), ~\textbf{x}_i \in \textbf{X}$  ($ \mathbf{X} \subset \mathbb{R}^2$ is the state space). Thus, in addition to the unconstrained optimal control and position $u_i^*(t),x_i^*(t)$, observations of the actual CAV state $\textbf{x}_i(t)$ provide direct feedback as well. 

To derive the CBFs that ensure the constraints (\ref{Safety}), (\ref{SafeMerging}), and (\ref{VehicleConstraints1}) are always satisfied, we use the vehicle dynamics (\ref{VehicleDynamics}) to define $f(\textbf{x}_i(t))=[v_i(t),0]^T$ and $g(\textbf{x}_i(t))=[0,1]^T$. Each of these constraints can be easily written in the form of $b_q(\textbf{x}(t)) \geq 0$, $q \in \lbrace  1,...,n \rbrace$ where $n$ stands for the number of constraints and $\mathbf{x}(t)=[\mathbf{x}_1(t),\mathbf{x}_2(t),...,\mathbf{x}_{N(t)}(t)]$. The CBF method (details provided in \cite{XIAO2021109592}) maps a constraint $b_q(\textbf{x}(t)) \geq 0$ onto a new constraint which is linear in the control input $u_i(t)$ and takes the general form 
\begin{equation} \label{CBF general constraint}
L_fb_q(\textbf{x}(t))+L_gb_q(\textbf{x}(t))u_i(t)+\gamma( b_q(\textbf{x}(t))) \geq 0,
\end{equation}
where $L_f,L_g$ denote the Lie derivatives of $b_q(\textbf{x}(t))$ along $f$ and $g$, respectively and $\gamma(\cdot)$ stands for any class-$\mathcal{K}$ function \cite{XIAO2021109592}. It has been established 
\cite{XIAO2021109592}
that satisfaction of (\ref{CBF general constraint}) implies the satisfaction of the original problem constraint $b_q(\textbf{x}(t)) \geq 0$ because of the forward invariance property. It is worth observing that the newly obtained constraints are \emph{sufficient} conditions for the original problem constraints, therefore, potentially conservative. 

We now apply (\ref{CBF general constraint}) to obtain the CBF constraint associated with the safety constraint (\ref{Safety}). By setting
\begin{align} \label{b1}
    b_1(\textbf{x}_i(t),\textbf{x}_{i_p}(t))&=z_{i,i_{p}}(t)-\varphi v_{i}(t)-\delta \nonumber\\
    &=x_{i_p}(t)-x_i(t)-\varphi v_i(t)-\delta,
\end{align} and since $b_1(\textbf{x}_i(t),\textbf{x}_{i_p}(t))$ is differentiable,
the CBF constraint for (\ref{Safety}) is
\begin{equation}\label{CBF1}\small
\underbrace{v_{i_p}(t)-v_i(t)}_{L_fb_1(\textbf{x}_i(t),\textbf{x}_{i_p}(t))}+\underbrace{-\varphi}_{L_gb_1(\textbf{x}_i(t))} u_i(t)+\underbrace{k_1(z_{i,i_p}(t)-\varphi v_i(t)-\delta)}_{\gamma_1(b_1(\textbf{x}_i(t),\textbf{x}_{i_p}(t)))} \geq 0,
\end{equation}
where the class-$\mathcal{K}$ function $\gamma(x)=k_1x$ is chosen here to be linear.


Deriving the CBF constraint for the safe merging constraint (\ref{SafeMerging}) poses a technical challenge due to the fact that it only applies at a certain time $t_i^{m}$, whereas a CBF is required to be in a continuously differentiable form. To tackle this problem. we apply a technique used in \cite{XIAO2021109592} to convert (\ref{SafeMerging}) to a continuous differentiable form as follows:
\begin{equation}
z_{i,i_c}(t)-\Phi(x_i(t)) v_{i}(t)-\delta \geq 0, \   {\ }\forall t\in[t_i^0,t_i^{m}],
\end{equation}
where $\Phi : \mathbb{R} \rightarrow \mathbb{R}$ may be any continuously differentiable function as long as it is strictly increasing and satisfies the boundary conditions $\Phi(x_i(t_i^0))=0$ and $\Phi(x_i(t_i^{m}))=\varphi$. In this case, a linear function can satisfy both conditions:
\begin{equation}
    \Phi(x_i(t))=\varphi  \frac{x_i(t)}{L},
\end{equation}
where $L$ is the length of road traveled by the CAV from its entry to the CZ to the MP of interest in (\ref{SafeMerging}).
Then by setting
\begin{align} \label{b1}
    b_2(\textbf{x}_i(t),\textbf{x}_{i_c}(t))&=z_{i,i_c}(t)-\varphi v_{i}(t)-\delta \nonumber\\
    &=x_{i_c}(t)-x_i(t)-\Phi(x_i(t)) v_i(t)-\delta,
\end{align}
proceeding as in the derivation of (\ref{CBF1}), we obtain:
\begin{align}\small \label{CBF2}
&\underbrace{v_{i_c}(t)-v_i(t)-\frac{\varphi}{L}v_i^2(t)}_{L_fb_2(\textbf{x}_i(t),\textbf{x}_{i_c}(t))}+\underbrace{-\varphi \frac{x_i(t)}{L}}_{L_gb_2(\textbf{x}_i(t))}u_i(t)+\nonumber \\ &\underbrace{k_2(z_{i,i_c}(t)-\varphi  \frac{x_i(t)}{L} v_i(t)-\delta)}_{\gamma_2(b_2(\textbf{x}_i(t),\textbf{x}_{i_c}(t)))} \geq 0.
\end{align}


The speed constraints in \eqref{VehicleConstraints1} are also easily transformed into CBF constraints using (\ref{CBF general constraint}) by defining 
\begin{equation} \label{b3}
b_3(\textbf{x}_i(t))=v_{\max}-v_i(t),
\end{equation}
\begin{equation}\label{b4}
b_4(\textbf{x}_i(t))=v_i(t)-v_{\min}.
\end{equation}
This yields:
\begin{equation} \label{CBF3}
\underbrace{-1}_{L_gb_3(\textbf{x}_i(t))}u_i(t)+\underbrace{k_3(v_{\max}-v_i(t))}_{\gamma_3(b_3(\textbf{x}_i(t)))} \geq 0
\end{equation}
\begin{equation}\label{CBF4}
    \underbrace{1}_{L_gb_4(\textbf{x}_i(t))}u_i(t)+\underbrace{k_4(v_i(t)-v_{\min})}_{\gamma_4(b_4(\textbf{x}_i(t)))} \geq 0,
\end{equation}
for the maximum and minimum velocity constraints, respectively.

\textbf{Inclusion of soft constraints in (\ref{eqn:energyobja}).} As a last step in the OCBF approach, we can exploit the versatility of the CBF method to include soft constraints expressed as terminal state costs in (\ref{eqn:energyobja}), e.g., the CAV achieving a desired terminal speed. This is accomplished
by using a Control Lyapunov Function (CLF) to track specific state variables in the reference trajectory if desired. A CLF $V(\textbf{x}_i(t))$ is similar to a CBF (see \cite{XIAO2021109592}). In our problem, letting $V(\textbf{x}_i(t))=(v_i(t)-v_{i}^\textrm{ref}(t))^2$ we can express the CLF constraint associated with tracking the CAV speed to a desired value $v_{i}^\textrm{ref}(t)$ (if one is provided) as follows:
\begin{equation}\label{CLF}
L_fV(\textbf{x}_i(t))+L_gV(\textbf{x}_i(t))u_i(t)+\epsilon V(\textbf{x}_i(t))\leq e_i(t),
\end{equation}
where $\epsilon >0 $ and $e_i(t)$ makes this a soft constraint.

Now that all the original problem constraints have been transformed into CBF constraints, we can formulate the OCBF problem as follows:
\begin{equation}\label{QP-OCBF}\small
\min_{u_i(t),e_i(t)}J_i(u_i(t),e_i(t)):=\int_{t_i^0}^{t_i^f}\big[\frac{1}{2}(u_i(t)-u_{i}^{\textrm{ref}}(t))^2+\lambda e^2_i(t)\big]dt
\end{equation}
subject to vehicle dynamics (\ref{VehicleDynamics}), the CBF constraints (\ref{CBF1}), (\ref{CBF2}), \eqref{CBF3}, \eqref{CBF4}, the control constraint \eqref{VehicleConstraints2}, and CLF constraint (\ref{CLF}).
Note that this is a decentralized optimization problem, as it only requires information sharing with a small number of \enquote{neighbor} CAVs, i.e. CAV $i_p$ and $i_c$ (if they exist). We denote this set of CAV neighbors by $\mathcal{R}_i(t)$ at time $t$:
\begin{equation} \label{eq:neighborset}
    \mathcal{R}_i(t)=\{i_p(t),i_c(t)\}.
\end{equation}
Note that $\mathcal{R}_i(t)$ in general can change over time, i.e., $i_c(t)$ changes when dynamic \enquote{resequencing} (discussed in \cite{2020Weidynreseq}) is carried out and $i_p(t)$ changes in the case of lane changing maneuvers. It is worth mentioning that in the single lane merging example in Fig. \ref{fig:merging} $i_p$ cannot change.

A common way to solve this dynamic optimization problem is to discretize $[t_i^0,t_i^f]$ into intervals $[t_i^0,t_i^0+\Delta),...,[t_i^0+k\Delta,t_i^0+(k+1)\Delta),...$ 
with equal length $\Delta$ and solving (\ref{QP-OCBF}) over each time interval. The decision variables $u_{i,k}=u_i(t_{i,k})$ and $e_{i,k}=e_i(t_{i,k})$ are assumed to be constant on each interval and can be easily calculated at time $t_{i,k}=t_i^0+k\Delta$ through solving a QP at each time step:
\begin{align} \label{QP}
\min_{u_{i,k},e_{i,k}}&[ \frac{1}{2}(u_{i,k}-u_{i}^{\textrm{ref}}(t_{i,k}))^2+\lambda e_{i,k}^{2}],
\end{align} 
subject to the CBF constraints (\ref{CBF1}), (\ref{CBF2}), \eqref{CBF3}, \eqref{CBF4}, and control input bounds \eqref{VehicleConstraints2} and CLF constraint (\ref{CLF}) where all constraints are linear in the decision variables. We refer to this as the \emph{time-driven} approach, which is fast and can be readily used in real time. 

The main problem with this approach is that a QP may become infeasible at any time instant because the decision variable $u_{i,k}$ is held constant over a given time period $\Delta$. Since this is externally defined, there is no guarantee that it is small enough to ensure the forward invariance property of a CBF, thereby also failing to ensure the satisfaction of the safety constraints. In other words, in this time-driven approach, there is a critical (and often restrictive) assumption that the control update rate is high enough to avoid such a problem. There are several additional issues worth mentioning: $(i)$ imposing a high update rate makes the solution of multiple QPs inefficient since it increases the computational burden, $(ii)$ using a common update rate across all CAVs renders their synchronization difficult, and $(iii)$ the predictability of a time-driven communication mechanism across CAVs makes the whole system susceptible to malicious attacks.  
As we will show next, the two event-driven solutions proposed in this paper alleviate these problems by eliminating the need to select a time step $\Delta$.


\section{EVENT-DRIVEN SOLUTIONS}
\label{sec:event-triggered}
There are several possible event-driven mechanisms one can adopt to invoke the solution of the QPs in (\ref{QP}) subject to the CBF constraints (\ref{CBF1}), (\ref{CBF2}), \eqref{CBF3}, \eqref{CBF4} along with control input bounds \eqref{VehicleConstraints2}. One approach is to adopt an \emph{event-triggering} scheme
such that we only need to solve a QP (with its associated CBF constraints) when one of two possible events (as defined next) is detected. We will show that this provides a guarantee for the satisfaction of the safety constraints which cannot be offered by the time-driven approach described earlier. The key idea is to ensure that the safety constraints are satisfied while the state remains within some bounds and define events which coincide with the state reaching these bounds, at which point the next instance of the QP in (\ref{QP}) is triggered.
Another idea is to create a \emph{self-triggering} framework with a minimum inter-event time guarantee by 
predicting at $t_{i,k}$ the first time instant that any of the CBF constraints in the QP problem (\ref{QP}) is subsequently violated. We then select that as the next
time instant $t_{i,k+1}$ when CAV $i$ communicates with the coordinator and updates the control.

\subsection{Event-triggered Control}

Let $t_{i,k}$, $k=1,2,...$, be the time instants when the QP in (\ref{QP}) is solved by CAV $i$. Our goal is to guarantee that the state trajectory does not violate any safety constraints within any time interval $[t_{i,k},t_{i,k+1})$ where $t_{i,k+1}$ is the next time instant when the QP is solved. 
Define a subset of the state space of CAV $l$ at time $t_{i,k}$ such that:
\begin{equation} \label{bound}
\textbf{x}_i(t_{i,k})-\textbf{s}_i \leq \textbf{x}_i(t) \leq \textbf{x}_i(t_{i,k})+\textbf{s}_i,
\end{equation}
where $\textbf{s}_i =\left[s_{i_x} \ \ s_{i_v} \right]^T \in \mathbb{R}_{>0}^2$ is a parameter vector whose choice will be discussed later. Intuitively, this choice reflects a trade-off between \emph{computational efficiency} (when the $\textbf{s}_i$ values are large and there are fewer instances of QPs to be solved) and \emph{conservativeness} (when the values are small). We denote the set of states of CAV $i$ that satisfy \eqref{bound} at time $t_{i,k}$ by 
\begin{equation} \label{event bound}
S_i(t_{i,k}) = \Bigl\{ \textbf{y}_i \in \textbf{X}: ~\textbf{x}_i(t_{i,k})-\textbf{s}_i \leq \textbf{y}_i \leq \textbf{x}_i(t_{i,k})+\textbf{s}_i\Bigl\}.
\end{equation} 
In addition, let $C_{i,1}$ be the feasible set of our original constraints \eqref{Safety}, \eqref{SafeMerging} and \eqref{VehicleConstraints1} defined as
\begin{equation} \label{event:ci1}
    C_{i,1}:=\Bigl\{  \mathbf{x}_i\in \mathbf{X}: ~b_q(\mathbf{x}_i)\geq 0, \ q \in \lbrace 1,2,3,4 \rbrace \Bigl\}.
\end{equation}

Next, we seek a bound and a control law that satisfies the safety constraints within this bound. This can be accomplished by considering the minimum value of each component in \eqref{CBF general constraint} for every $q \in \lbrace 1,2,3,4 \rbrace $  as shown next.

Let us start with the first of the three terms in \eqref{CBF general constraint}, $L_fb_q(\textbf{x}(t))$. Observing that not all state variables are generally involved in a constraint $b_q(\textbf{x}(t)) \ge 0$, we can rewrite this term as 
$L_fb_q(\textbf{y}_i(t),\textbf{y}_r(t))$ with $\mathbf{y}_i(t)$ as in (\ref{event bound}) and where $r$ stands for \enquote{relevant} CAVs affecting the specific constraint of $i$, i.e., $r \in \mathcal{R}_i(t)$ in (\ref{eq:neighborset}). 
Let $b^{\min}_{q,f_i}(t_{i,k})$ be the minimum possible value of the term $L_fb_q(\textbf{y}_i(t),\textbf{y}_r(t))$ 
over the time interval $[t_{i,k},t_{i,k+1})$ for each $q= \lbrace 1,2,3,4 \rbrace $ over the set $\Bar{S_i}({t_{i,k}}) \cap \Bar{S_r}({t_{i,k}})$:

\begin{equation}\label{minfi}
b^{\min}_{q,f_i}(t_{i,k})=\displaystyle\min_{\textbf{y}_i \in \Bar{S}_i({t_{i,k}}) \atop \textbf{y}_r \in \Bar{S}_r({t_{i,k}})}L_fb_q(\textbf{y}_i(t),\textbf{y}_r(t)),
\end{equation} 
where $\Bar{S}_i({t_{i,k}})$ is defined as follows:
\begin{equation}
    \Bar{S}_i({t_{i,k}}):=\lbrace\mathbf{y}_i \in C_{i,1} \cap S_i(t_{i,k}) \rbrace
\end{equation}

Similarly, we can define the minimum value of the third term in \eqref{CBF general constraint}:
\begin{equation}\label{mingammai}
b^{\min}_{\gamma_q}(t_{i,k})=\displaystyle\min_{\textbf{y}_i \in \Bar{S}_i({t_{i,k}}) \atop \textbf{y}_r \in \Bar{S}_r({t_{i,k}})} \gamma_q(\textbf{y}_i(t),\textbf{y}_r(t)).
\end{equation}

For the second term in \eqref{CBF general constraint}, note that $L_gb_q(\mathbf{x}_i)$ is a constant for $ q=\{1,3,4\} $, as seen in 
(\ref{CBF1}), (\ref{CBF3}) and (\ref{CBF4}),
therefore there is no need for any minimization. However, $L_gb_2(\mathbf{x}_i)=-\varphi \frac{x_i(t)}{L}$ in (\ref{CBF2}) is state-dependent and needs to be considered for the minimization. Since $x_i(t) \ge 0$, note that $L_gb_2(\mathbf{x}_i)$ is always negative, therefore, we can determine the limit value $b^{\min}_{2,g_i}(t_{i,k}) \in \mathbb{R}, $ as follows:
\begin{eqnarray}\label{mingi} \small
b^{\min}_{2,g_i}(t_{i,k})=\begin{cases}
\displaystyle\min_{\textbf{y}_i \in \Bar{S}_i({t_{i,k}}) \atop \textbf{y}_r \in \Bar{S}_r({t_{i,k}})}L_gb_2(\textbf{x}_i(t)), \  \textnormal{if}\  u_{i,k} \geq 0\\
\\
\displaystyle\max_{\textbf{y}_i \in \Bar{S}_i({t_{i,k}}) \atop \textbf{y}_r \in \Bar{S}_r({t_{i,k}})}L_gb_2(\textbf{x}_i(t)), \ \ \  \textnormal{otherwise},
\end{cases}
\end{eqnarray}
where the sign of $u_{i,k},  \ i \in F(t_{i,k})$ can be determined by simply solving the CBF-based QP  \eqref{QP} at time $t_{i,k}$.

Thus, the condition that can guarantee the satisfaction of \eqref{CBF1}, \eqref{CBF2} and \eqref{CBF3}, \eqref{CBF4} in the time interval $\left[t_{i,k},t_{i,k+1}\right)$ is given by
\begin{equation} \label{minCBF}
b^{\min}_{q,f_i}(t_{i,k})+b^{\min}_{q,g_i}(t_{i,k})u_{i,k}+b^{\min}_{\gamma_q}(t_{i,k})\geq 0,
\end{equation}
for $q=\lbrace1,2,3,4\rbrace$. In order to apply this condition to the QP \eqref{QP}, we just replace \eqref{CBF general constraint} by \eqref{minCBF} as follows:
\begin{align} \label{eq:QPtk}
\min_{u_{i,k},e_{i,k}}& \Bigl[ \frac{1}{2}(u_{i,k}-u_i^{\textmd{ref}}(t_{i,k}))^2+\lambda e_{i,k}^{2}\Bigl]\nonumber\\
 &\textnormal{s.t.} \ \  \eqref{CLF},\eqref{minCBF},\eqref{VehicleConstraints2}
\end{align} 
It is important to note that each instance of the QP \eqref{eq:QPtk} is now triggered by one of the following two events where $k =1,2,\ldots$ is a local event (rather than time step) counter:

\begin{itemize}
    \item \textbf{Event 1:} the state of CAV $i$ reaches the boundary of $S_i(t_{i,k-1})$.
    \item \textbf{Event 2:} the state of CAV $r \in \mathcal{R}_i(t_{i,k-1})$ reaches the boundary of $S_{r}(t_{i,k-1})$, if $\mathcal{R}_i(t_{i,k-1})$ is nonempty. In this case either $r=i_p$ or $r=i_c$ (e.g., in the merging problem $i_c=i-1 \neq i_p$ if such a CAV exists). Thus, Event 2 is further identified by the CAV which triggers it and denoted accordingly by \textbf{Event 2}($r$), $r \in \mathcal{R}_i(t_{i,k-1})$.
\end{itemize}

As a result, $t_{i,k},k=1,2,...$ is unknown in advance but can be determined by CAV $i$ through:
\begin{align} \label{events}
t_{i,k}=\min \Big\{ t>t_{i,k-1}:\vert\textbf{x}_i(t)-\textbf{x}_i(t_{i,k-1})\vert=\textbf{s}_i \\ \nonumber
\text{or} \ \ \vert\textbf{x}_{i_p}(t)-\textbf{x}_{i_p}(t_{i,k-1})\vert=\textbf{s}_{i_p} \\ \nonumber 
\text{or} \ \ \vert\textbf{x}_{i_c}(t)-\textbf{x}_{i_c}(t_{i,k-1})\vert=\textbf{s}_{i_c}\Big\},
\end{align}
where $t_{i,0}=t_i^0$. 
Note that $k$ is a \emph{local} event counter for each $i$ so, strictly speaking, we should use $k_i$. Instead, the index $k$ can be dropped and we can write $\textbf{x}_i(t_{i,\textrm{last}})$ rather than $\textbf{x}_i(t_{i,k-1})$. However, when there is no ambiguity, we will simply write $\textbf{x}_i(t_{i})$ to indicate that $t_i$ is the ``last event'' occurring at $i$.


The definition above is based on events which directly affect CAV $i$ (leading to $i$ solving a QP)  whether they are triggered by $i$ or $r \neq i$. Alternatively, we may think of any CAV $i$ as generating Event 1 leading to a new QP solution by CAV $i$ itself and Event 2($i$) which affects some $j \in \{l |i \in \mathcal{R}_l(t)\}$, i.e., $i$ is relevant to some $j \neq i$. In this case, a violation of the bound of $S_i(t_{i,k-1})$ or $S_i(t_{j,k-1})$ by the evolving state of CAV $i$ triggers events relevant to CAV $i$ or $j$, respectively.

Events 1,2($r$) can be detected through the dynamics in \eqref{VehicleDynamics} or from on-board state measurements, if available, along with state information from relevant other CAVs (e.g., CAVs $i_p$ and $i_c$ in Fig. \ref{fig:merging}) through the coordinator. Finally, note that because of the Lipschitz continuity of the dynamics in \eqref{VehicleDynamics} and the fact that the control is constant within an inter-event interval, Zeno behavior does not occur in this framework.


The following theorem formalizes our analysis by showing that if new constraints of the general form \eqref{minCBF} hold, then our original CBF constraints \eqref{CBF1}, \eqref{CBF2} and \eqref{CBF3}, \eqref{CBF4} also hold. 

\begin{theorem}\label{as:1} Given a CBF $b_q(\mathbf{x(t)})$ with relative degree one, let $t_{i,k}$, $k=1,2,\ldots$ be determined by \eqref{events} with $t_{i,0}=t_i^0$ and $b^{\min}_{q,f_i}(t_{i,k})$, $b^{\min}_{\gamma_q}(t_{i,k})$, $b^{\min}_{q,g_i}(t_{i,k})$  for $q=\{1,2,3,4\}$ obtained through \eqref{minfi}, \eqref{mingammai}, and \eqref{mingi}. Then, any control input $u_{i,k}$ that satisfies \eqref{minCBF} for all $q \in \lbrace 1,2,3,4 \rbrace$  within the time interval $[t_{i,k},t_{i,k+1})$ renders the set $C_{i,1}$ forward invariant for the dynamic system defined in (\ref{VehicleDynamics}).
\end{theorem}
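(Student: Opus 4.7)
The plan is to prove the statement in two logical stages. First, I will show that on each inter-event interval $[t_{i,k},t_{i,k+1})$, condition \eqref{minCBF} is at least as strong as the pointwise CBF constraint \eqref{CBF general constraint}. Second, I will invoke the standard CBF forward-invariance theorem from \cite{XIAO2021109592} to conclude that $C_{i,1}$ is rendered forward invariant. I will set up the argument inductively on the event counter $k$, with the base case being that $\mathbf{x}_i(t_i^0)\in C_{i,1}$ by problem initialization.

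The first key step is to use the event rule \eqref{events} to pin down the states on $[t_{i,k},t_{i,k+1})$. By definition of $t_{i,k+1}$, no component of $\mathbf{x}_i(\cdot)$ or $\mathbf{x}_r(\cdot)$ for $r\in\mathcal{R}_i(t_{i,k})$ has crossed the band of width $\mathbf{s}$ around its value at $t_{i,k}$, so
\begin{equation*}
\mathbf{x}_i(t)\in S_i(t_{i,k}),\qquad \mathbf{x}_r(t)\in S_r(t_{i,k}),\qquad \forall t\in[t_{i,k},t_{i,k+1}).
\end{equation*}
Combining this with the inductive hypothesis $\mathbf{x}_i(t_{i,k})\in C_{i,1}$ (together with continuity of trajectories and the fact that we are about to prove $b_q\ge 0$ is preserved) gives $\mathbf{x}_i(t),\mathbf{x}_r(t)\in\bar S_i(t_{i,k})\cap\bar S_r(t_{i,k})$.

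The second step is to lower-bound each term of \eqref{CBF general constraint} by its counterpart in \eqref{minCBF}. Directly from the definitions \eqref{minfi} and \eqref{mingammai},
\begin{equation*}
L_f b_q(\mathbf{x}_i(t),\mathbf{x}_r(t))\ge b^{\min}_{q,f_i}(t_{i,k}),\qquad \gamma_q\!\left(b_q(\mathbf{x}_i(t),\mathbf{x}_r(t))\right)\ge b^{\min}_{\gamma_q}(t_{i,k}).
\end{equation*}
For the control-affine term, the sign of $L_g b_q$ is either constant (cases $q\in\{1,3,4\}$) or can be sign-analyzed (case $q=2$, where $L_g b_2=-\varphi x_i(t)/L\le 0$). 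Combined with the case split on $u_{i,k}$ that is built into \eqref{mingi}, this yields
\begin{equation*}
L_g b_q(\mathbf{x}_i(t))\,u_{i,k}\ \ge\ b^{\min}_{q,g_i}(t_{i,k})\,u_{i,k}.
\end{equation*}
Summing the three inequalities and invoking \eqref{minCBF} shows that \eqref{CBF general constraint} holds pointwise for every $t\in[t_{i,k},t_{i,k+1})$ and every $q\in\{1,2,3,4\}$.

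The third step is simply to apply the forward-invariance result for relative-degree-one CBFs: since $b_q(\mathbf{x}_i(t_{i,k}))\ge 0$ by the inductive hypothesis and \eqref{CBF general constraint} holds throughout the interval, $b_q(\mathbf{x}_i(t))\ge 0$ is preserved for all $t\in[t_{i,k},t_{i,k+1})$. Continuity at the event time $t_{i,k+1}$ closes the induction. The main technical obstacle I expect is not any deep mathematics but rather the sign-tracking bookkeeping for the $L_g b_2$ term under both signs of $u_{i,k}$; the case split in \eqref{mingi} is precisely what makes the inequality $L_g b_q\,u_{i,k}\ge b^{\min}_{q,g_i}\,u_{i,k}$ go through, and writing out both cases explicitly is the only delicate part of the argument.
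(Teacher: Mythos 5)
Your proposal follows essentially the same route as the paper's proof: bound each of the three terms of \eqref{CBF general constraint} from below by its minimized counterpart over $\bar S_i(t_{i,k})\cap\bar S_r(t_{i,k})$ (with the sign split in \eqref{mingi} handling the control-affine term), sum the inequalities to recover \eqref{CBF general constraint} pointwise on $[t_{i,k},t_{i,k+1})$, and then invoke Theorem 1 of \cite{XIAO2021109592} for forward invariance of $C_{i,1}$. Your added induction on $k$ and the explicit acknowledgment of the mild circularity in asserting $\mathbf{x}_i(t)\in C_{i,1}$ before invariance is established are refinements of detail, not a different argument.
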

\begin{proof}
The proof follows along similar lines as Theorem 2 in \cite{Xiao2021EventTriggeredSC}.
By \eqref{event bound}, we can write:
\begin{equation}
\textbf{y}_i(t) \in S_i(t_{i,k}), ~\textbf{y}_{r}(t) \in S_{r}(t_{i,k}), ~\textbf{y}_i(t) \in C_{i,1}
\end{equation}
for all $t \in [t_{i,k},t_{i,k+1}),  k=1,2,...$. 
\begin{equation}
L_fb_q(\textbf{x}_i(t)) \geq b^{\min}_{q,f_i}(t_k),
\end{equation}
\begin{equation}
\gamma_q(\textbf{x}_i(t)) \geq b^{\min}_{\gamma_q}(t_k),
\end{equation}
\begin{equation}
L_gb_q(\textbf{x}_i(t))u_i(t_k) \geq b^{\min}_{q,g_i}(t_k)u_i(t_k),
\end{equation}
for $q \in \lbrace1,2,3,4\rbrace$, by adding these inequalities which have the same direction it follows that 
\begin{align}
L_fb_q(\textbf{x}_i(t))+L_gb_q(\textbf{x}_i(t))u_i(t_k)+\gamma_q(\textbf{x}_i(t))\\ \nonumber 
\geq b^{\min}_{q,f_i}(t_k)+b^{\min}_{q,g_i}(t_k)u_i(t_k) +b^{\min}_{\gamma_q}(t_k) \geq 0. 
\end{align}
i.e., \eqref{CBF general constraint} is satisfied.
By Theorem 1 of \cite{XIAO2021109592} applied to \eqref{CBF general constraint}, if $x_i(0) \in
C_{i,1}$, then any Lipschitz continuous controller $u_i(t)$
that satisfies \eqref{CBF general constraint} $\forall t \geq 0$ renders $C_{i,1}$ forward
invariant for system \eqref{VehicleDynamics}. Therefore, $C_{i,1}$ is forward invariant for the
dynamic system defined in (\ref{VehicleDynamics}).
\end{proof}

\textbf{Remark} 1:
Expressing \eqref{minCBF} in terms of the minimum value of each component separately may become overly conservative if each minimum value corresponds to different points in the decision variable space. Therefore, an alternative approach is to calculate the minimum value of the whole term. 

\textbf{Selection of parameters $\mathbf{s}_i$.} The importance of properly selecting the parameters $\mathbf{s}_i$ is twofold. First, it is necessary to choose them such that all events are observed, i.e., given the sensing capabilities and limitations of a CAV $i$, the value of $\mathbf{s}_i$ must be large enough to ensure that no events will go undetected. 
In particular, the variation of the states of CAV $i$ within the sensor sampling time must not be greater than bounds $\mathbf{s}_i$.
Therefore, letting $T_s$ be a given sensor sampling time, the maximum state (position and speed) variation during this sampling time must satisfy:
\begin{equation}
    \begin{aligned} \label{min_s_x}
    x_i(t+T_s)-x_i(t) \leq v_{\max}T_s
    \end{aligned}
\end{equation}
\begin{equation}\label{min_s_v}
    \begin{aligned}
    v_i(t+T_s)-v_t(t) \leq \max(u_{\max}T_s,|u_{\min}|T_s)
    \end{aligned}
\end{equation}
where $v_{\max}$, $u_{\max}$, and $u_{\min}$ are given CAV $i$ specifications. Therefore, we need to pick lower bounds given by the maximum state variations in \eqref{min_s_x} and \eqref{min_s_v} as follows:
\begin{equation}
   \mathbf{s}_i=\left[
\begin{array}{cc}
     s_{ix}  \\
     s_{iv}
\end{array}\right]\geq \left[\begin{array}
[c]{c}%
v_{\max}T_s\\
\max(u_{\max}T_s,|u_{\min}|T_s)
\end{array}
\right].
\end{equation}

Second, the choice of $\mathbf{s}_i$ captures the trade-off between computational cost and conservativeness: the larger the value of each component of $\mathbf{s}_i$ is, the smaller the number of events that trigger instances of the QPs becomes, thus reducing the total computational cost. At the same time, the control law must satisfy the safety constraints over a longer time interval as we take the minimum values in \eqref{minfi}-\eqref{mingi}, hence rendering the approach more conservative.



{\bf Communication Scheme}. As mentioned earlier, a coordinator is responsible for exchanging information among CAVs (but does not exert any control). To accommodate event-triggered communication, the coordinator table in Fig. \ref{fig:merging} is extended as shown in Table \ref{Table c} so that it includes ``relevant CAV info'' data for each CAV $i$. In particular, in addition to the states of CAV $i$ in column 2, denoted by $\textbf{x}_i(t_{i})$, the states of CAVs $r \in \mathcal{R}_i(t_i)$ are included, denoted by $\textbf{x}_r(t_{i})$, in column 3, where CAV $r$ affects the constraints of CAV $i$, i.e., $r \in \mathcal{R}_i(t_i)$.  In an event-driven scheme, 
frequent communication is generally not needed, since it occurs only when an event is triggered.  CAV $i$ updates its state in the coordinator table and re-solves a QP in two cases depending on which event occurs: 

$(i)$ Event 1 triggered by $i$. The first step is state synchronization: CAV $i$ requests current states from all relevant CAVs and the coordinator updates these (column 3), as well as the state $\textbf{x}_i(t_i)$ (column 2). 
CAV $i$ then solves its QP while the coordinator notifies all CAVs $r \in \mathcal{R}_i(t_i)$ of the new CAV $i$ state 
so they can update their respective boundary set $S_r(t_{i})$. This may trigger an Event 2($r$) to occur at some future event time as in \eqref{events}; such an event cannot be triggered instantaneously, as it takes some finite time for a bound in $S_r(t_{i})$ to be reached because of Lipschitz continuity in the dynamics. In addition, the coordinator notifies all CAVs $j$ such that $i \in \mathcal{R}_j(t_i)$  (i.e. $i$ is relevant to $j$) so that they can update their bounds $S_j(t_i)$ respectively.

$(ii)$ Event 2($r$) is triggered by $r \in \mathcal{R}_i(t_i)$. When CAV $r$ reaches the boundary set $S_r(t_{i})$ it notifies the coordinator to update its state (column 2). The coordinator passes on this information to all CAVs $j$ where $r \in \mathcal{R}_j(t_i)$, which includes $i$ since $r \in \mathcal{R}_i(t_i)$, and the corresponding state of $r$ is updated (column 3). Then, CAV $i$ re-solves its QP and the coordinator updates $t_i$ to the current time and the state $\textbf{x}_i(t_i)$ (column 2) and the state $\textbf{x}_r(t_i)$(column 3). The rest of the process is the same as in case $(i)$. 

Note that any update in CAV $i$'s state due to a triggered event can immediately affect only CAVs $l>i$ such that $i$ is relevant to $l$. If an \enquote{event chain} ensues, the number of events is bounded by $N(t_i)$. 

\textbf{Remark} 2:
It is possible to simplify the communication scheme by assuming that each CAV can measure (through local sensors) the state of its relevant CAVs (i.e. in the case of CAV $i$, the states of the CAV  $r \in \mathcal{R}_i(t_i)$ ). Thus, CAVs can check for violations not only in their own state boundaries $S_i(t_i)$ but also in their relevant CAV state boundaries, $S_r(t_i)$. The same applies to the case where CAVs have a direct vehicle-to-vehicle (V2V) communication capability.


\begin{table}
        \begin{center}
        \begin{tabular}{|c|c|c|c|}
        \hline
        \multicolumn{4}{|c|}{Extended Coordinator Table} \\
        \hline
        Index & CAV Info & Relevant CAV info & Lane\\
        \hline
        0 & $\boldsymbol{x}_\textnormal{0}(t_{0})$ & - & Main\\
        \hline
         1 & $\textbf{x}_\textnormal{1}(t_{1})$ & - & Main\\
        \hline
        2 & $\textbf{x}_\textnormal{2}(t_{2})$ & $\textbf{x}_\textnormal{1}(t_{2})$ & Merging\\
        \hline
        3 & $\textbf{x}_\textnormal{3}(t_{3})$ & $\textbf{x}_\textnormal{1}(t_{3})$ ,$\textbf{x}_\textnormal{2}(t_{3})$ & Merging\\
        \hline
        4 & $\textbf{x}_\textnormal{4}(t_{4})$ & $\textbf{x}_\textnormal{1}(t_{4})$ ,$\textbf{x}_3(t_{4})$ & Main\\
        \hline
        5 & $\textbf{x}_\textnormal{5}(t_{5})$ & $\textbf{x}_\textnormal{4}(t_{5})$ & Main\\
        \hline
        \end{tabular}
        \caption{Extended coordinator table from Fig \ref{fig:merging}\\
        for event triggered control. }
        \label{Table c}
        \end{center}
\end{table}
\subsection{Self-Triggered Control}
As an alternative to event-triggered control, a self-triggered asynchronous control scheme can be used where each CAV $i$ communicates with the coordinator at specified time instants $\{t_{i,k}\},{k\in \mathbb{Z}^+}$. At each such instant $t_{i,k}$, CAV $i$ uploads its own state information $\textbf{x}_i(t_{i,k})$, the calculated control input $u_i(t_{i,k})$ that is going to be applied over the time interval $[t_{i,k},t_{i,k+1})$, and the next time when CAV $i$ will communicate with the coordinator and resolve its QP, denoted as $t_{i,{\textmd{next}}}$. The data stored at the coordinator for all vehicles are shown in Table \ref{tab:1}. We denote the most recent stored information of the $i$-th CAV at the coordinator as $\mathcal{I}_i=[t_{i,{\textmd{last}}},t_{i,{\textmd{next}}},x_i(t_{i,{\textmd{last}}}),v_i(t_{i,{\textmd{last}}}),u_i(t_{i,{\textmd{last}}})]
$. 
\begin{table}
\begin{center}
\begin{tabular}{|l|l|}
\hline
$t_{i,{\textmd{last}}}$ & Last time CAV $i$ communicated.\\
$t_{i,{\textmd{next}}}$ & Next time CAV $i$ will communicate.\\
$x_i(t_{i,{\textmd{last}}})$ & Last updated position of CAV $i$.\\
$v_i(t_{i,{\textmd{last}}})$ & Last updated velocity of CAV $i$.\\
 $u_i(t_{i,{\textmd{last}}})$ & Last control input of CAV $i$.\\
 \hline
\end{tabular}
\caption{Data stored on the coordinator for self-triggered control}\label{tab:1}
\end{center}
\end{table}

The goal is to develop a self-triggered asynchronous algorithm to determine the sequence of time instants  $t_{i,k}$ and the control input $u_i(t), t\in [t_{i,k},t_{i,k+1})$ for each CAV to solve the problem formed in \eqref{QP-OCBF}. Providing a lower bound for the inter-event time interval is an imperative feature in a self-triggered scheme. It is worth mentioning that since Zeno behavior never occurs under Lypschitz continuity, such a guarantee is not necessary for the event-triggered control algorithm. To provide such a guarantee for the generated time instants $t_{i,k}$, there should exist some $T_d>0$  such that  $|t_{i,k+1}-t_{i,k}|\geq T_d$. This is a design parameter which depends on the sensor sampling rate, as well as the clock of the on-board embedded system on each CAV. For the same reason, the time-instants  $t_{i,k}$ are calculated such that $(t_{i,k}~\textrm{mod}~T_d)=0$ where $\textrm{mod}$ denotes the modulo operator.
In contrast to the time-driven scheme with a fixed sampling time $\Delta$, each CAV $i \in F(t_{i,k})$ calculates the time instant $t_{i,k}$ in which the  QP problem must be solved in a self-triggered fashion. As in the event-triggered scheme, at each time instant $t_{i,k}$,  CAV $i$ solves its QP problem to obtain $u_i(t_{i,k})$. However, unlike the event-triggered scheme, CAV $i$ also calculates the next time instant $t_{i,k+1}$ at which it should resolve the QP problem. Note that similar to the time-driven scheme, the newly obtained control input, $u_i(t_{i,k})$ is held constant over the time interval  $[t_{i,k},t_{i,k+1})$ for CAV $i$. 


We address two problems in the following. First, it will be shown how a lower bound $T_d$ on the inter-event time interval can be ensured. Second, we will show how each CAV $i \in F(t_{i,k})$ specifies the time instants $t_{i,k}$.

\subsubsection{Minimum Inter-event Time, $T_d$}
In this subsection, it is shown how the CBF constraints \eqref{CBF1}, \eqref{CBF2}, \eqref{CBF3}, and \eqref{CBF4} for the CAV $i$ should be modified to ensure a minimum inter-event time $T_d$. This is achieved by adding extra positive terms 
to the right hand side of these constraints.
 First, consider the maximum speed CBF constraint \eqref{CBF3} to be satisfied when solving the QP problem at $t_{i,k}$ with feasible solution $u_i(t_{i,k})$. Thus, we have:
\begin{align} \label{cbf_i1}
\mathcal{C}_{i,1}(t_{i,k},u_i(t_{i,k}))&:=-u_i(t_{i,k})+k_3b_3(\textbf{x}_i(t_{i,k})) \geq 0.
\end{align}
However, the CBF constraint should be satisfied over the entire time interval $[t_{i,k},t_{i,k}+T_d]$ to ensure the minimum inter-event time. Therefore, for all $t\in [t_{i,k},t_{i,k}+T_d]$:
\begin{align} 
\mathcal{C}_{i,1}(t,u_i(t_{i,k}))&=-u_i(t_{i,k})+k_3b_3(\textbf{x}_i(t)) \geq 0.\label{eq_c1} 
\end{align}
By defining $\tau=t-t_{i,k}$ as the elapsed time after $t_{i,k}$, and recalling that the acceleration is kept constant over the inter-event time, we can derive an expression for the velocity $v_i(t)$ as follows:
\begin{equation} \label{vi}
v_i(\tau)=v_i(t_{i,k})+u_i(t_{i,k})\tau, \ \ \tau \in [0,T_d].
\end{equation}
Now by using \eqref{CBF3}, \eqref{cbf_i1}, and, \eqref{vi} we can rewrite \eqref{eq_c1} as follows:
\begin{align} \label{cbf_i1-eq_c1}
\mathcal{C}_{i,1}(t,u_i(t_{i,k}))&=\mathcal{C}_{i,1}(t_{i,k},u_i(t_{i,k}))-k_3u_i(t_{i,k})\tau  \ \ \tau \in [0,T_d],
\end{align}
In what follows, we show that if $\mathcal{C}_{i,1}(t_{i,k},u_i(t_{i,k}))\geq \sigma_{i,1}(T_d)$ holds, then:
\begin{equation} \label{temp 1}
    \mathcal{C}_{i,1}(t,u_i(t_{i,k}))\geq 0, \ \ \ \forall t \in [t_{i,k},t_{i,k}+T_d],
\end{equation}
where $\sigma_{i,1}(T_d):=k_3u_MT_d$ and $u_M=\max(|u_{\min}|,u_{\max}) >0$.  To prove \eqref{temp 1}, we can rewrite $\mathcal{C}_{i,1}(t_{i,k},u_i(t_{i,k}))\geq \sigma_{i,1}(T_d)$ as follows:
\begin{align} \label{inequality1}
 \mathcal{C}_{i,1}(t_{i,k},u_i(t_{i,k}))&- \mathcal{C}_{i,1}(t,u_i(t_{i,k})) \nonumber \\&+   \mathcal{C}_{i,1}(t,u_i(t_{i,k}))\geq \sigma_{i,1}(T_d)
\end{align}
By combining \eqref{inequality1} with \eqref{cbf_i1-eq_c1}, for all $t \in [t_{i,k},t_{i,k}+T_d]$ and $\tau \in [0,T_d]$ we have:
\begin{align}
    \mathcal{C}_{i,1}(t,u_i(t_{i,k})) \geq \sigma_{i,1}(T_d)&-k_3u_i(t_{i,k})\tau \geq 0,
\end{align}
where the non-negativity of the inequality follows from the definition of $\sigma_{i,1}(t_{i,k},T_d)=k_3u_MT_d$, i.e., $k_3u_MT_d-k_3u_i(t_{i,k})\tau \geq 0$ for $\tau \in [0,T_d]$. Hence, in order to ensure the minimum inter-event interval $T_d$, the CBF constraint \eqref{CBF3} should be modified to:
\begin{align}\label{cbf_modified_1}
   \mathcal{C}_{i,1}(t,u_i(t))\geq \sigma_{i,1}(T_d).
\end{align}
Following a similar derivation of the modified CBF constraint for the minimum speed \eqref{CBF3}, it follows that \eqref{CBF4} should be modified to:
\begin{align}\label{cbf_modified_2}
 \mathcal{C}_{i,2}(t,u_i(t)) \geq \sigma_{i,2}(T_d),
\end{align}
where 
\begin{align}
    \mathcal{C}_{i,2}(t,u_i(t))&:=u_i(t)+k_4b_4(\textbf{x}_i(t)) \nonumber \\ \sigma_{i,2}(T_d)&:=k_4u_{M}T_d.
\end{align}

Next, let us consider the safety CBF constraint \eqref{CBF1} to be satisfied when solving the QP problem at $t_{i,k}$ with a feasible solution $u_i(t_{i,k})$. It follows that 
\begin{align} \nonumber
  \mathcal{C}_{i,3}(t_{i,k},u_i(t_{i,k})):= & v_{i_p}(t_{i,k})-v_i(t_{i,k})-{\varphi} u_i(t_{i,k})\\ &+k_1b_1(\textbf{x}_i(t_{i,k}),\textbf{x}_{i_p}(t_{i,k}))\geq 0. \label{eq_c2}
\end{align}
Once again, we need to ensure that the CBF constraint is satisfied over the entire time interval $[t_{i,k},t_{i,k}+T_d]$ as follows:
\begin{align} \nonumber
     \mathcal{C}_{i,3}(t,u_i(t_{i,k})) &=v_{i_p}(t)-v_i(t)-{\varphi} u_i(t_{i,k})\\&+k_1b_1(\textbf{x}_i(t),\textbf{x}_{i_p}(t)) \geq 0, \ \ t \in [t_{i,k},t_{i,k}+T_d]. \label{eq_30}
\end{align}
For ease of notation, we use the following definitions:
    \begin{equation} \label{viip_tik}
     \Delta v_{i,i_p}(t_{i,k})=v_{i_p}(t_{i,k})-v_i(t_{i,k}),
    \end{equation}
    \begin{equation}\label{uiip_tik}
       \Delta u_{i,i_p}(t_{i,k})=u_{i_p}(t_{i,k})-u_i(t_{i,k}).
    \end{equation}
Similar to the procedure of deriving the lower bound for constraints \eqref{CBF3} and \eqref{CBF4}, by using \eqref{CBF1}, \eqref{viip_tik}, and \eqref{uiip_tik}, we rewrite \eqref{eq_30} as follows:
\begin{align} 
 \mathcal{C}_{i,3}(t,u_i(t_{i,k}))= & \mathcal{C}_{i,3}(t_{i,k},u_i(t_{i,k}))+\Delta u_{i,i_p}(t_{i,k})\tau \nonumber \\  &+ k_1 \bigl(  0.5\Delta u_{i,i_p}(t_{i,k})\tau^2 +\Delta v_{i,i_p}(t_{i,k})\tau \nonumber\\&-\varphi u_i(t_{i,k})\tau  \bigl) \geq 0,  \ \ \ \tau \in [0,T_d] \label{Ci3}.
\end{align}
To further ease up the notation, we define
\begin{equation} \label{M3}
    \mathcal{M}_{i,3}(t,t_{i,k},u_i(t_{i,k})):=  \mathcal{C}_{i,3}(t_{i,k},u_i(t_{i,k}))-\mathcal{C}_{i,3}(t,u_i(t_{i,k})),
\end{equation} 
which will be used later on.
Similarly, in the following we intend to show that if $\mathcal{C}_{i,3}(t_{i,k},u_i(t_{i,k}))\geq  \sigma_{i,3}(t_{i,k},T_d)$ holds, it follows:
\begin{equation} \label{temp 3}
   \mathcal{C}_{i,3}(t,u_i(t_{i,k}))\geq 0, \ \ \ t \in [t_{i,k},t_{i,k}+T_d],
\end{equation}
where
\begin{align}
    \sigma_{i,3}(t_{i,k},T_d) := &|u_{i_p}(t_{i,k})|+k_1 \bigl(0.5 T_d^2 (|u_{i_p}(t_{i,k})|+u_M) \nonumber \\
    &+(|v_{i,i_p}(t_{i,k})|+(1+\varphi)u_M)T_d\bigl).
\end{align}
To demonstrate \eqref{temp 3}, we follow the same procedure as before by starting with
\begin{equation} \label{temp 2}
   \mathcal{C}_{i,3}(t_{i,k},u_i(t_{i,k}))\geq  \sigma_{i,3}(t_{i,k},T_d),
\end{equation}
and then rewrite \eqref{temp 2} in the following form:
\begin{align} \label{inequality2}
\mathcal{C}_{i,3}(t_{i,k},u_i(t_{i,k}))&- \mathcal{C}_{i,3}(t,u_i(t_{i,k})) \nonumber \\&+  \mathcal{C}_{i,3}(t,u_i(t_{i,k}))\geq \sigma_{i,3}(t_{i,k},T_d)
\end{align}
Then, combining \eqref{M3} and \eqref{inequality2}, for $t \in [t_{i,k},t_{i,k}+T_d]$ follows that:
\begin{align}
    \mathcal{C}_{i,3}(t,u_i(t_{i,k})) &\geq \sigma_{i,3}(t_{i,k},T_d) - \mathcal{M}_{i,3}(t,t_{i,k},u_i(t_{i,k}))
\end{align}
where $\sigma_{i,3}(t_{i,k},T_d)$, i.e. the upper bound of $\mathcal{M}_{i,3}(t,t_{i,k},u_i(t_{i,k}))$, is chosen such that the left hand side of the inequality is always positive:
\begin{align}
  \sigma_{i,3}(t_{i,k},T_d) - \mathcal{M}_{i,3}(t,t_{i,k},u_i(t_{i,k})) \geq 0,
\end{align}
hence, by modifying the CBF constraint \eqref{CBF1} to:
\begin{align} \label{cbf_modified_3}
\mathcal{C}_{i,3}(t,u_i(t))\geq  \sigma_{i,3}(t,T_d),
\end{align}
one can enforce \eqref{eq_30}. 

Following a similar approach, to provide a minimum inter-event time $T_d$, the CBF constraint \eqref{CBF2} should be modified to, 
\begin{align} \label{cbf_modified_4}
  \mathcal{C}_{i,4}(t,u_i(t))\geq \sigma_{i,4}(t,T_d),
\end{align}
where
\begin{align*}
    \mathcal{C}_{i,4}(t,u_i(t))=&v_{i_c}(t)-v_i(t)-\frac{\varphi}{L}v_i^2(t)-\varphi \frac{x_i(t)}{L}u_i(t)+\nonumber \\ &k_2(b_2(\textbf{x}_i(t),\textbf{x}_{i_c}(t)),
\end{align*}
\begin{align}
    \sigma_{i,4}(t,T_d):=&0.5\frac{\varphi}{L}u^2_M T_d^3 + \nonumber \\
    +&k_4\Big( \frac{3\varphi}{2L}(u_M^2+|v_i(t)| u_M)+0.5 (|u_{i_c}(t)|+u_M) \Big)T^2_d \nonumber \\
    +&\Big(|u_{i_c}(t)|+ ( \frac{3\varphi}{L}|v_i(t)|+\frac{\varphi}{L}|x_i(t)|+1)u_M\nonumber \\
    +&|v_{i_c}(t)|+|v_i(t)|+\frac{\varphi}{L}v_i^2(t) \Big)k_4 T_d.
\end{align}


Finally, since the CLF constraint \eqref{CLF} is added optionally for an optimal trajectory, it can be relaxed in the presence of safety constraints and there is generally no need to ensure that it is satisfied during the whole time-interval $t \in [t_{i,k},t_{i,k}+T_d]$ with the same relaxation variable value $e_i(t_{i,k})$. Therefore, there is no need to modify it as was necessary for the CBF constraints. In conclusion, to ensure the minimum inter-event time $T_d$, at each time instant $t_{i,k}$, CAV $i$ needs to solve the following QP:
\begin{align}  \label{QP2}
 \min_{u_{i,k},e_{i,k}}~~\frac{1}{2}(u_{i,k}-u_i^{\textmd{ref}}(t_{i,k}))^2+\lambda e_{i,k}^2
\end{align}
subject to the modified CBF constraints \eqref{cbf_modified_1}, \eqref{cbf_modified_2}, \eqref{cbf_modified_3}, and \eqref{cbf_modified_4}, the control input bounds \eqref{VehicleConstraints2} and the CLF constraint \eqref{CLF}. In the next subsection, it will be shown how the time-instant $t_{i,k}$ should be obtained for CAV $i$.

\subsubsection{Self-Triggered Time Instant Calculation}

The key idea in the self-triggered framework is to predict the first time instant that any of the CBF constraints \eqref{CBF1}, \eqref{CBF2}, \eqref{CBF3} or \eqref{CBF4}, is violated and select that as the next time instant $t_{i,k+1}$. CAV $i$ then communicates with the coordinator and requests the necessary information to solve its next QP and obtain a new control input $u_i(t_{i,k+1})$ and update its stored data in the coordinator table. Note that it is not required to consider the modified CBF constraints \eqref{cbf_modified_1}, \eqref{cbf_modified_2}, \eqref{cbf_modified_3}, and \eqref{cbf_modified_4} here, since these are obtained purely for ensuring the minimum inter-event time $T_d$, while the original CBF constraints \eqref{CBF1}, \eqref{CBF2}, \eqref{CBF3}, and \eqref{CBF4} are sufficient for satisfying constraints 1, 2, and 3 (state limitation constraint) in problem \eqref{eqn:energyobja}.

For the speed constraint \eqref{CBF3},  it is clear that if $u_i(t_{i,k})\leq0$ (decelerating), then this constraint always holds, hence there is no need to check it. However, for $u_i(t_{i,k}) >0$ (accelerating), the constraint \eqref{CBF3} can be violated.  To calculate the time instant, $t^1_{i,k}$, when this occurs we need to solve the following equation:
\begin{equation} \label{temp 4}
    -u_i(t_{i,k})+k_3(v_{\max}-v_i(t))=0, \ \ \ \  t > t_{i,k}.
\end{equation}
Recalling that the acceleration is held constant in the inter-event time, \eqref{temp 4} can be rewritten as
\begin{equation}\label{temp 5}
    -u_i(t_{i,k})+k_3(v_{\max}-v_i(t_{i,k})-u_i(t_{i,k}) (t-t_{i,k}))=0
\end{equation} 
and its solution yields:
\begin{align*}
t^1_{i,k}=t_{i,k}+\frac{-u_i(t_{i,k})+k_3v_{\textmd{max}}-k_3v_i(t_{i,k})}{k_3u_i(t_{i,k})}.
\end{align*}
{Observe that at $t_{i,k}$, the QP in \eqref{QP2} is solved, therefore the constraint \eqref{cbf_modified_1} is satisfied at $t=t_{i,k}$ and we have $-u_i(t_{i,k})+k_3(v_{\textmd{max}}-v_i(t_{i,k}))  \geq \sigma_{i,1}(T_d) > 0$. It follows that $t^1_{i,k}\geq t_{i,k}+T_d$. }

For the second speed constraint \eqref{CBF4},  it is clear that if $u_i(t_{i,k})\geq0$ (accelerating), then this constraint is satisfied, hence there is no need to check it. However, for $u_i(t_{i,k}) <0$ (decelerating), the constraint \eqref{CBF4} can be violated. Similar to the previous case, we can solve the following equation for $t$ to obtain $t^2_{i,k}$ as the first time instant that constraint \eqref{CBF4} is violated:
\begin{equation}\label{temp 6}
    u_i(t_{i,k})+k_4(v_i(t_{i,k})+u_i(t_{i,k}) (t-t_{i,k})-v_{\textmd{min}}) = 0 \ \ \ \  t > t_{i,k}.
\end{equation}
Solving \eqref{temp 6} leads to
\begin{align*}
t^2_{i,k}=t_{i,k}+\frac{-u_i(t_{i,k})+k_4v_{\textmd{min}}-k_4v_i(t_{i,k})}{k_4u_i(t_{i,k})},
\end{align*}
and it can be shown, similar to the previous case, that $t^2_{i,k}\geq t_{i,k}+T_d$.

For the rear-end safety constraint \eqref{CBF1},  we need to find the first time instant $t>t_{i,k}$ such that $\mathcal{C}_{i,3}(t,u_i(t_{i,k}))=0$ in 
 \eqref{Ci3}. This leads to the following quadratic equation:
\begin{align*}
k_1&\big (0.5  \Delta u_{i,i_p}(t_{i,k})\big)\tau^2+\big ( \Delta u_{i,i_p}(t_{i,k})+k_1(\Delta v_{i,i_p}(t_{i,k})\\&-\varphi u_i(t_{i,k}) )\big )\tau+ \mathcal{C}_{i,3}(t_{i,k},u_i(t_{i,k}))=0.
\end{align*}
The least positive root of the above equation is denoted  as $\tau_{i,3}$ and we define $t^3_{i,k}=t_{i,k}+\tau_{i,3}$. { The case of having both roots negative corresponds to the constraint \eqref{CBF1} not being violated, hence $t^3_{i,k}=\infty$. } Moreover, due to the added term in \eqref{cbf_modified_3}, it follows that $t^3_{i,k}\geq t_{i,k}+T_d$.

Similarly for the safe merging constraint \eqref{CBF2},  the first time instant  $t>t_{i,k}$ such that $\mathcal{C}_{i,4}(t,u_i(t_{i,k}))=0$ can be obtained by solving the following cubic equation: 

\begin{align*}
&-k_4\frac{\varphi}{2L} u^2_i(t_{i,k})\tau^3+ \big (0.5  \Delta u_{i,i_c}(t_{i,k}) -k_4 \frac{3\varphi}{2L}u^2_i(t_{i,k})+
\\&-k_4\frac{3\varphi}{2L} v_{i}(t_{i,k}) u_i(t_{i,k}) \big ) \tau^2
+k_4\Big (\Delta u_{i,i_c}(t_{i,k})-\frac{3\varphi}{L}v_i(t_{i,k})u_i(t_{i,k})\\
&+  ( \Delta v_{i,i_c}(t_{i,k})-\frac{\varphi}{L} v^2_i(t_{i,k})-\frac{\varphi}{L} u_i(t_{i,k})x_{i}(t_{i,k})) \Big)\tau\\
 &+\mathcal{C}_{i,4}(t_{i,k},u_i(t_{i,k}))=0,
\end{align*}
where
\begin{align*}
    \Delta v_{i,i_c}(t_{i,k})=v_{i_c}(t_{i,k})-v_i(t_{i,k})\\
    \Delta u_{i,i_c}(t_{i,k})=u_{i_c}(t_{i,\textmd{k}})-u_i(t_{i,k}).
\end{align*}
The least positive root   is denoted  as $\tau_{i,4}$ and we define $t^4_{i,k}=t_{i,k}+\tau_{i,4}$. Moreover, due to solving QP in \eqref{QP2} subject to the modified CBF constraint derived in \eqref{cbf_modified_4}, it follows that $t^4_{i,k}\geq t_{i,k}+T_d$.  The case of having all roots negative corresponds to the constraint \eqref{CBF2} not being violated, hence $t^4_{i,k}=\infty$.

\subsubsection{Self-Triggered Scheme}

First, it should be noted that the time instants $t^q_{i,k}$, $q=1,\dots,4$  are obtained based on the safety constraints \eqref{Safety} and \eqref{SafeMerging}, as well as the vehicle state limitations \eqref{VehicleConstraints1}. However, this choice can compromise the optimal performance of CAVs in the CZ. In particular, it is possible that the acceleration of a CAV stays constant for a long period of time if there are no safety constraints or vehicle state limit violations, whereas, as shown in \cite{XIAO2021109592}, the optimal acceleration trajectory of the CAV in fact changes linearly. Therefore, in order to avoid this issue and minimize deviations from the optimal acceleration trajectory, one can impose a maximum allowable inter-event time, denoted by $T_{\max}$. To accomplish this, we can define
\begin{align} \label{eq:tmin}
    t^{\min}_{i,k}=\min \Bigl\{t^1_{i,k},t^2_{i,k},t^3_{i,k},t^4_{i,k},t_{i,k}+T_{\max} \Bigl\}.
\end{align}

The next update time instant for CAV $i$, i.e. $t_{i,k+1}=t_{i,\textmd{next}}$ should now be calculated. Towards this goal, consider the case where $t^{\min}_{i,k} \leq \min(t_{i_p,\textmd{next}},t_{i_c,\textmd{next}})$,
which corresponds to the next update time instant of CAV $i$ occurring before the next control update of the preceding vehicle $i_p$ or the conflict CAV $i_c$. Then, we can set $t_{i,k+1}=t_{i,\textmd{next}}=t^{\min}_{i,k}$ from (\ref{eq:tmin}).

The only remaining case is when $t^{\min}_{i,k}> \min(t_{i_p,\textmd{next}},t_{i_c,\textmd{next}})$, which corresponds to either CAV $i_p$  or $i_c$ updating its control input sooner than CAV $i$, hence CAV $i$ does not have access to their updated control input. Consequently, checking the constraints \eqref{CBF1} and \eqref{CBF2}  is no longer valid. In this case, $t_i^\textmd{next}= \min(t_{i_p,\textmd{next}},t_{i_c,\textmd{next}})+T_d$ which implies that CAV $i$'s next update time will be immediately  after the update time of  CAV $i_p$  or $i_c$ with a minimum inter-event time interval $T_d$. 

By setting $t_{r,\textrm{next}}^{\min}=\min(t_{{i_p},\textmd{next}},t_{i_c,\textmd{next}})$, we can summarize the selection of the next self-triggered time instant as follows:
\begin{align}\label{t_next}
    t_{i,{\textrm{next}}}=\left \{ \begin{array}{ll}
        t_{i,k}^{\min}, \ \ \ \ \ \ \ \ \ t_{i,k}^{\min}\leq t_{r,\textrm{next}}^{\min} \\
       t_{r,\textrm{next}}^{\min}+ T_d, \ \ \ \  \textrm{otherwise},
    \end{array} \right.
\end{align} 
Finally, in order to have $(t_{i,k}~\textrm{mod}~T_d)=0$, we set $t_{i,\textmd{next}}=\lfloor \frac{t_{i,\textmd{next}}}{T_d} \rfloor \times T_d$. 

It should be noted that the case of  $t_{i,{\textrm{next}}}=t_{i_c,{\textrm{next}}}$ or $t_{i,{\textrm{next}}}=t_{i_p,{\textrm{next}}}$  corresponds to having identical next update times for CAV $i$ and CAV $i_c$ or CAV $i_p$ so that they need to solve their QPs at the same time instant. However, in order for CAV $i$ to solve its QP at the time instant $t_{i,k+1}=t_{i,\textmd{next}}$, it requires the updated control input of CAV $i_c$ or CAV $i_p$, i.e. $u_{i_c}(t_{i,k+1})$ or $u_{i_p}(t_{i,k+1})$; this is practically not possible. In order to remedy this issue, whenever  $t_{i,{\textrm{next}}}=t_{i_c,{\textrm{next}}}$ or $t_{i,{\textrm{next}}}=t_{i_p,{\textrm{next}}}$, CAV $i$ solves its QP at $t_{i,k+1}$ by  using  $u_M$ instead of $u_{i_c}(t_{i,k+1})$ and  $u_{i_p}(t_{i,k+1})$ in \eqref{cbf_modified_3} and \eqref{cbf_modified_4}. This corresponds to considering the worst case in $\sigma_{i,3}(t,T_d)$ and $\sigma_{i,4}(t,T_d)$. Moreover, since calculating the next update time $t_{i,k+2}$ also depends on $u_{i_c}(t_{i,k+1})$ and  $u_{i_p}(t_{i,k+1})$, CAV $i$ in this case acts similar to the time-driven case with assigned  $t_{i,k+2}=t_{i,k+1}+T_d$. Then, at the next time instant $t_{i,k+2}$, CAV $i$ can obtain the updated control inputs of CAV $i_c$ and CAV $i_p$ from the coordinator and follows the proposed self-triggered scheme. 

{\bf Communication Scheme}.
In view of the constraints \eqref{CBF1} and \eqref{CBF2}, CAV $i$ requires knowledge of  $t_{i_p,last}, v_{i_p}(t_{i,k})$, \ $x_{i_p}(t_{i,k})$,  $t_{i_c,last}, v_{i_c}(t_{i,k})$, and $x_{i_c}(t_{i,k})$ at time instant $t_{i,k}$. Hence, at each time instant that it accesses the coordinator, it needs to download the recorded data of CAV $i_p$ and $i_c$. 
Then, the required updated information at  $t_{i,k}$ for CAV $i_p$ can be calculated as
\begin{equation}
    v_{i_p}(t_{i,k})= v_{i_p}(t_{i_p,\textmd{last}})+(t_{i,k}-t_{i_p,\textmd{last}})u_{i_p}(t_{i_p,\textmd{last}})
\end{equation}
\begin{align}
    x_{i_p}(t_{i,k})= x_{i_p}(t_{i_p,\textmd{last}})+&(t_{i,k}-t_{i_p,\textmd{last}})v_{i_p}(t_{i_p,\textmd{last}})\nonumber \\
    +&\frac{1}{2}(t_{i,k}-t_{i_p,\textmd{last}})^2u_{i_p}(t_{i_p,\textmd{last}})
\end{align}
with similar information calculated for CAV $i_c$. Note that the information for CAV $i_p$ may also be obtained from the on-board sensors at CAV $i$ if such are available. There are two key differences between the event-triggered and self-triggered approaches in the communication scheme as follows: (i). In the self triggered approach in addition to the states of the CAVs $\mathbf{x}_i(t_{i,last})$, control input $u_i(t_{i,last})$, current time instant $t_{i,last}$, and the next time instant of solving QP $t_{i,next}$ have to be shared. Whereas in the event-triggered only states of the CAVs and the states of the relevant CAVs at the time of the QP solving is needed. (ii). In this scheme, unlike the event-triggered scheme, the coordinator does not notify other relevant CAVs when a particular CAV solves QP and updates its data since the next QP solving time is known and stored in the coordinator table. For example, when CAV $i$ solves its QP there is no need for the CAVs $j$ where $i \in R_j(t_i)$ to be notified as they are already aware. Instead, the coordinator only receives and stores the current time instant, states, control input, and the next time instant of solving QP of the CAV $i$. Also upon download request from a particular CAV at the time of solving QP, access to the data of the relevant CAVs $r$ will be given to that particular CAV by coordinator. 


\section{SIMULATION RESULTS}
\label{sec:simulation}

All algorithms in this section have been implemented using \textsc{MATLAB}.
We used \textsc{quadprog} for solving QPs of the form
\eqref{QP}, \eqref{eq:QPtk} and \eqref{QP2}, \textsc{lingprog} for solving the linear programming in \eqref{minfi}, \eqref{mingammai} and \eqref{mingi}, \textsc{fmincon} for a nonlinear optimization problem arising when \eqref{minfi} and \eqref{mingammai} become nonlinear, and \textsc{ode45} to integrate the vehicle dynamics.

We have considered the merging problem shown in Fig. \ref{fig:merging} where CAVs are simulated according to Poisson arrival processes with an arrival rate which is fixed for the purpose of comparing the time-driven approach and the event-driven schemes (over different bound values in \eqref{events} for the event-triggered scheme and with different $T_{\max}$ for the self-triggered scheme). The initial speed $v_{i}(t_{i,0})$ is also randomly generated with a uniform distribution over $[15 \textnormal{m/s}, 20\textnormal{m/s}]$ at the origins $O$ and $O^{\prime}$, respectively. The
parameters for \eqref{QP-OCBF}, \eqref{eq:QPtk}, and \eqref{QP2}
 are: $L = 400\textnormal{m}, \varphi = 1.8\textnormal{s}, \delta = 0\textnormal{m}, u_{\max} = 4.905 \textnormal{m/s}^2, u_{\min} = -5.886\textnormal{m/s}^2, v_{\max} = 30\textnormal{m/s}, v_{\min} = 0\textnormal{m/s},  k_1=k_2=k_3=k_4=1,  \lambda= 10$ and $T_d=0.05$. The sensor sampling rate is $20$Hz, sufficiently high to avoid missing any triggering event as discussed earlier.
 The control update period for the time-driven control is $\Delta t=0.05$s. For the event-triggered scheme, we let the bounds $S=[s_x,s_v]$ be the same for the all CAVs in the network and vary them between the values of $\lbrace[0.5,1.5],[0.5,2],[0.5,2.5]\rbrace$. For the self-triggered scheme, we set $T_{\max} \in \{0.5,1,1.5,2\}$ to allow a comprehensive comparison. 
 
In our simulations, we included the computation of a more realistic energy consumption model \cite{kamal2012model} to supplement the simple surrogate $L_2$-norm ($u^2$) model in our analysis:
$f_{\textrm{v}}(t)=f_{\textrm{cruise}}(t)+f_{\textrm{accel}}(t)$ with
\begin{align*}
    f_{\textrm{cruise}}(t) &= \omega_0+\omega_1v_i(t)+\omega_2v^2_i(t)+\omega_3v^3_i(t),\\
    f_{\textrm{accel}}(t) &=(r_0+r_1v_i(t)+r_2v^2_i(t))u_i(t).
\end{align*}
where we used typical values for parameters $\omega_1,\omega_2,\omega_3,r_0,r_1$ and, $r_2$ as reported in \cite{kamal2012model}.

Our results from several simulations corresponding to the three different methods under the same conditions with different values for the relative weight of energy vs time are shown in Tables \ref{Table event} and \ref{Table self}: the time-driven method, the event-triggered scheme, and, the self-triggered scheme.
We observe that by using the event-triggered and self-triggered approaches we are able to significantly reduce the number of infeasible QP cases (up to $95\%$) compared to the time-driven approach. At the same time, the overall number of instances when a QP needs to be solved has also decreased up to $68\%$ and $80\%$ in the event-triggered and self-triggered approaches, respectively. 
Note that the large majority of infeasibilities is due to holding acceleration constant over an inappropriate sampling time, which can invalidate the forward invariance property of CBFs over the entire time interval. These infeasible cases were eliminated by the event-triggered and self-triggered schemes. However, another source of infeasibility is due to  conflicts that may arise between the CBF constraints and the control bounds in a QP. This cannot be remedied through the proposed event-triggered or self-triggered QPs; it can, however, be dealt with by the introduction of a sufficient condition that guarantees no such conflict, as described in \cite{XIAO2022inf}. 

In Tables \ref{Table event} and \ref{Table self}, we can also observe some loss of performance (i.e. average travel time increases hence road throughput decreases) in both approaches as the values of the bound parameters in the event triggered approach and $T_{\max}$ in the self triggering approach increases, hence increasing conservativeness. On the other hand, this decreases the computational load expressed in terms of the number of QPs that are solved in both methods, illustrating the trade-off discussed in previous sections. 
There is also an apparent discrepancy in the energy consumption results: when the $L_2$-norm of the control input is used as a simple metric for energy consumption, the values are higher under event-triggered and self triggered control, whereas the detailed fuel consumption model shows lower values compared to time-driven control. This is due to the fact that $u_i^2$ penalizes CAVs when they decelerate, whereas this is not actually the case under a realistic fuel consumption model. 

\begin{table*}\scriptsize
        \centering
        \begin{tabular}{|c|c|c|c|c|c|}
            \cline{1-6}
             &Item & \multicolumn{3}{|c|}{Event triggered} & Time driven\\
            \cline{2-6}
            & Bounds & $s_v=0.5, s_x=1.5$ & $s_v=0.5, s_x=2$ & $s_v=0.5, s_x=2.5$  & $\Delta t = 0.05$\\
        \hline  
        \multirow{4}{*}{\makecell{$\alpha=0.1$ }} & Ave. Travel time & 19.61 & 19.73 & 19.65  & 19.42\\
        \cline{2-6}
        & Ave. $\frac{1}{2} u^2$ & 4.45 & 4.81  & 5.16    & 3.18\\
        \cline{2-6}
        & Ave. Fuel consumption &31.77 & 31.51 & 31.04  & 31.61\\
        \cline{2-6}
        &Computation load (Num of QPs solved) & 50\% (17853)&  47\% (16778) & 34\% (12168) &  100\% (35443) \\
        \cline{2-6}
        & Num of infeasible cases   & \textcolor{blue}{42}  & \textcolor{blue}{42} & \textcolor{blue}{43} &  \textcolor{red}{315}\\
        \hline
        \multirow{4}{*}{\makecell{$\alpha=0.25$ }}  & Ave. Travel time & 15.82 & 15.88 & 15.95  & 15.44\\
        \cline{2-6}
        & Ave. $\frac{1}{2} u^2$& 13.93 & 14.06 & 14.25 &  13.34\\
        \cline{2-6}
        & Ave. Fuel consumption & 52.12 & 51.69  & 51.42 &  55.81 \\
        \cline{2-6}
        &Computation load (Num of QPs solved) &51\% (14465) & 51\% (14403) & 48\% (13707) &  100\% (28200)\\
        \cline{2-6}
        & Num of infeasible cases   & \textcolor{blue}{27} & \textcolor{blue}{27} & \textcolor{blue}{28} &  \textcolor{red}{341} \\
        \hline
                \multirow{4}{*}{\makecell{$\alpha=0.4$ }}  & Ave. Travel time & 15.4 &  15.46 & 15.53  & 15.01 \\
        \cline{2-6}
        & Ave. $\frac{1}{2} u^2$& 18.04 & 18.13 & 18.22 &  17.67\\
        \cline{2-6}
        & Ave. Fuel consumption & 53.155 & 52.77 & 52.42 &  56.5\\
        \cline{2-6}
        &Computation load (Num of QPs solved)    & 54\% (14089) & 53\% (14072) & 49\% (13573) & 100\% (27412)\\
        \cline{2-6}
        & Num of infeasible cases   & \textcolor{blue}{25} & \textcolor{blue}{25} & \textcolor{blue}{25} &   \textcolor{red}{321} \\
        \hline       
                \multirow{4}{*}{\makecell{$\alpha=0.5$ }}  & Ave. Travel time & 15.05  & 15.11 & 15.17 & 14.63 \\
        \cline{2-6}
        & Ave. $\frac{1}{2} u^2$ &24.94 & 24.88 & 24.93 & 25.08\\
        \cline{2-6}
        & Ave. Fuel consumption & 53.65 & 53.41 & 53.21 &  56.93 \\
        \cline{2-6}
        &Computation load (Num of QPs solved) & 51\% (13764)  & 51\% (13758) & 50\% (13415) &  100\% (26726) \\
        \cline{2-6}
        & Num of infeasible cases   & \textcolor{blue}{20} & \textcolor{blue}{20} & \textcolor{blue}{20} &   \textcolor{red}{341}\\
        \hline        
        \end{tabular}
        \caption{CAV metrics under event-triggered (see section III.A) and time-driven control. }
        \label{Table event}
\end{table*}

\begin{table*}\scriptsize
        \centering
        \begin{tabular}{|c|c|c|c|c|c|c|c|}
            \cline{1-8}
             &Item & \multicolumn{4}{|c|}{Self-Triggered} & Time-driven & Time-driven\\
             & & \multicolumn{4}{|c|}{}  & Modified CBF & \\
            \cline{2-8}
            & $T_{\max}$ & $0.5$ & $1$ & $1.5$ & $2$ &  $T_s=T_d = 0.05$ & $T_s = 0.05$\\
        \hline  
        \multirow{5}{*}{{$\alpha=0.1$ }} & Ave. Travel time & 19.5 & 19.48 & 19.48& 19.49&  19.5 &19.42\\
        \cline{2-8}
        & Ave. $\frac{1}{2} u^2$ & 4.27 & 5.00 & 5.93 & 7.2 & 3.37 & 3.18\\
        \cline{2-8}
        & Ave. Fuel consumption & 31.86 & 32.21 & 32.64 & 33.23 &  31.32 &31.61\\
        \cline{2-8}
                & Computation load (Num of QPs solved)& 20.46\% (7252)&  11.9\% (4218) & 10.87\% (3854) &10.32\%(3658)&  100.5\% (35636) &100\%  \\
        \cline{2-8}
        & Num of infeasible cases   & \textcolor{blue}{42}  & \textcolor{blue}{42} & \textcolor{blue}{43} & \textcolor{blue}{32}& \textcolor{red}{190} & \textcolor{red}{315}\\
         \hline
        \multirow{5}{*}{{$\alpha=0.25$ }} & Ave. Travel time & 15.57 & 15.56 & 15.57 &15.62 &15.58&15.44\\
        \cline{2-8}
        & Ave. $\frac{1}{2} u^2$& 14.33 & 15.10 & 15.68 & 16.68 & 13.38 &13.34\\
        \cline{2-8}
        & Ave. Fuel consumption & 54.45 & 53.51  & 52.57 & 52.94 &  54.17 &55.81 \\
        \cline{2-8}
                &Computation load (Num of QPs solved) &19.5\%  (5495) & 13.68\% (3857) & 12.34\% (3479) & 12.72\% (3588) & 100.9\% (28461) & 100\%(28200)\\
         \cline{2-8}
         & Num of infeasible cases   & \textcolor{blue}{27} & \textcolor{blue}{27} & \textcolor{blue}{28} &\textcolor{blue}{24} & \textcolor{red}{249} &\textcolor{red}{341} \\
        \hline
                \multirow{5}{*}{{$\alpha=0.4$ }} & Ave. Travel time & 15.15 &  15.15 & 15.18  &15.2&  15.16 &15.01 \\
        \cline{2-8}
        & Ave. $\frac{1}{2} u^2$& 18.5 & 19.32 & 19.73 & 20.36 & 17.64 & 17.67\\
        \cline{2-8}
        & Ave. Fuel consumption & 55.23 & 53.35 & 52.67 & 52.95& 54.93 &56.5\\
        \cline{2-8}
                & Computation load (Num of QPs solved)    & 20.4\% (5591) & 14.85\% (4071) &13.69\%  (3754) & 13.60\% (3727) & 101.0 \%  (27695)  &100\%  (27412)\\
        \cline{2-8}
       & Num of infeasible cases   & \textcolor{blue}{25} &\textcolor{blue}{25}& \textcolor{blue}{25} & \textcolor{blue}{20} & \textcolor{red}{220} &\textcolor{red}{321} \\
        \hline   
                \multirow{5}{*}{{$\alpha=0.5$}} & Ave. Travel time & 14.79 & 14.79 & 14.82 & 14.89&  14.8 &14.63 \\
        \cline{2-8}
        & Ave. $\frac{1}{2} u^2$ &25.5 & 25.84  & 26.43 & 27.5  &24.86 &25.08\\
        \cline{2-8}
        & Ave. Fuel consumption & 55.5 & 53.15 & 52.9 & 53.45 	 & 55.5 &56.93 \\
        \cline{2-8}
                &Computation load (Num of QPs solved) &  21.8\% (5841) &16.7\%  (4322) & 15.09\%  (4034) & 15.17\%  (4054) &  101.1\% (27033) & 100\%(26726) \\
       \cline{2-8}
       & Num of infeasible cases   &\textcolor{blue}{19} &\textcolor{blue}{20}  &  \textcolor{blue}{20}& \textcolor{blue}{20} &  \textcolor{red}{250} & \textcolor{red}{341}\\
        \hline        
        \end{tabular}
        \caption{CAV metrics under self-triggered  (see section III.B) and time-driven control.  }
        \label{Table self}
\end{table*}
We can also visualize the results presented in Tables \ref{Table event} and \ref{Table self}  by showing the variation of the average objective functions in \eqref{eqn:energyobja_m} with respect to $\alpha$ for different choices of $[s_x,s_v]$ and $T_{\max}$ respectively. As seen in Fig. \ref{fig:Objective function comparison}, by selecting higher values for bounds in the event-triggered scheme and for $T_{\max}$ in the self-triggered scheme (being more conservative) the objective functions will also attain higher values, while the lowest cost (best performance) is reached under time-driven control.
\begin{figure}
\centering
\includegraphics[scale=0.38]{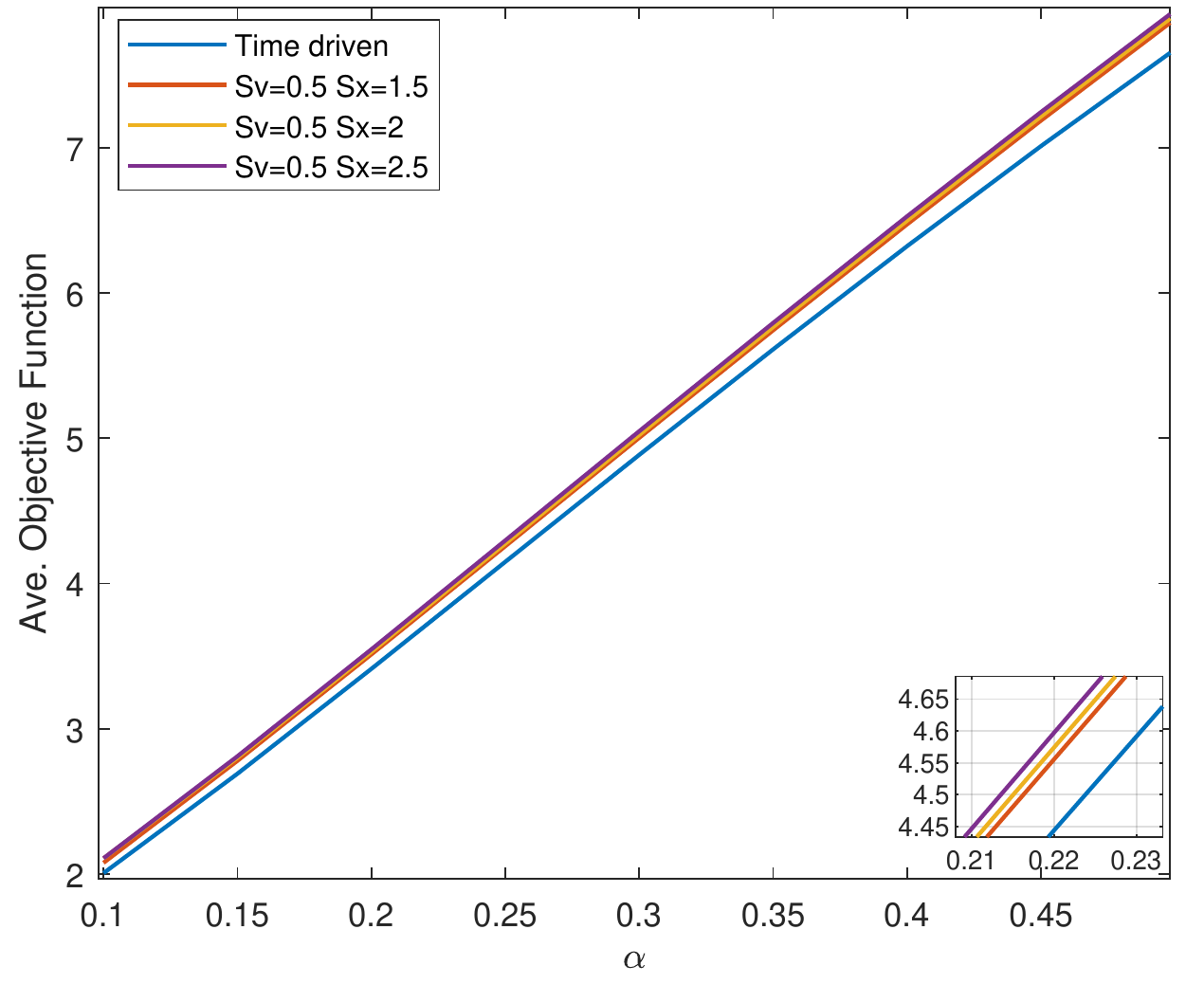} \caption{Average objective function value with respect to $\alpha$ (time weight with respect to energy in \eqref{eqn:energyobja_m}) for different selection of bounds in event-triggered approach (see Sec III.A).}
\label{fig:Objective function comparison}
\end{figure}
\begin{figure}
\centering
\includegraphics[scale=0.38]{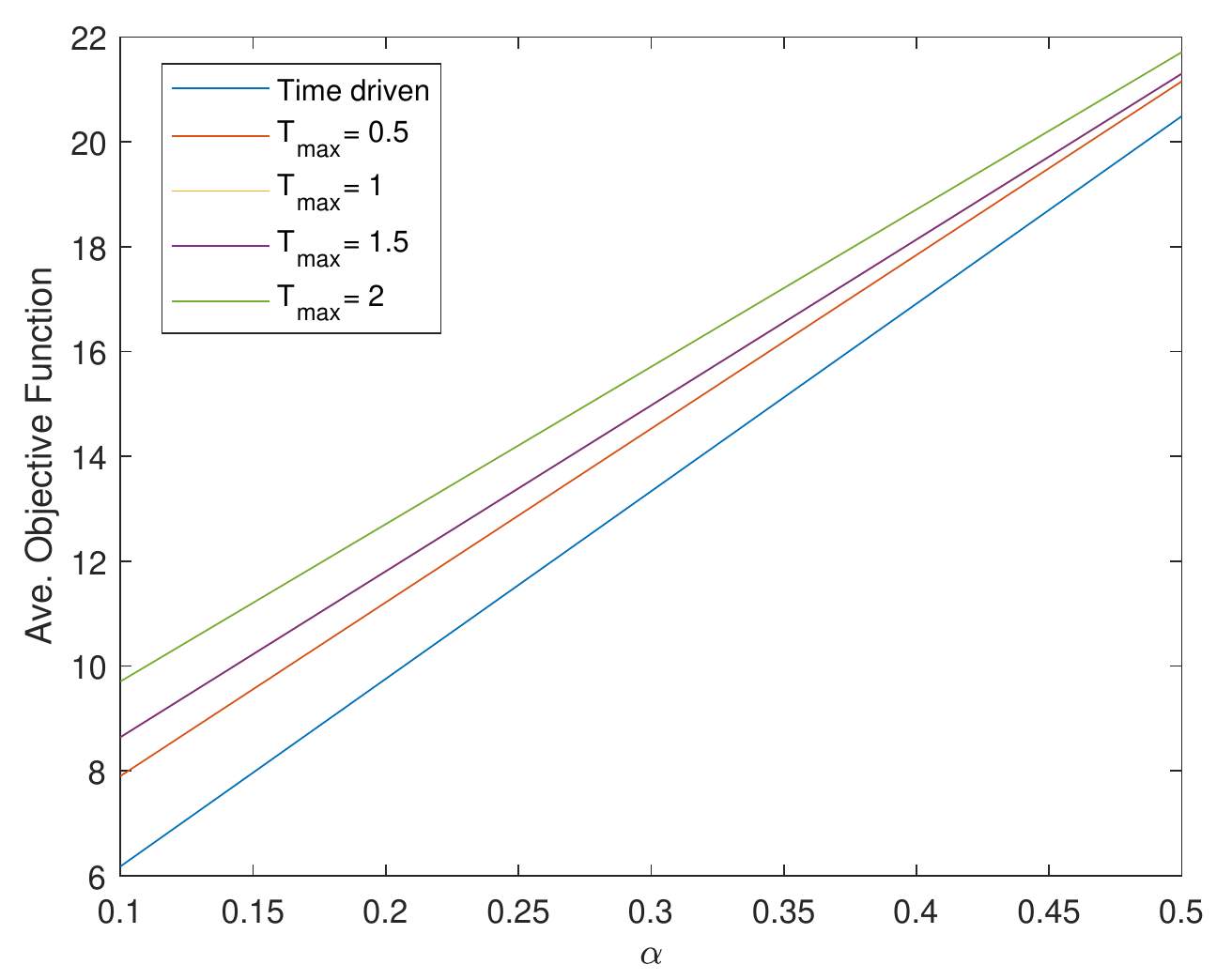} \caption{Average objective function value with respect to $\alpha$ (time weight with respect to energy in \eqref{eqn:energyobja_m}) for different selection of bounds in self-triggered approach (see Sec III.B).}
\label{fig:Objective function comparison}
\end{figure}

{\bf Constraint violation}.
It is worth noting that an ``infeasible'' QP does not necessarily imply a constraint violation, since violating a CBF constraint does not always imply the violation of an original constraint in \eqref{Safety}, \eqref{SafeMerging}, and \eqref{VehicleConstraints1}. This is due to the conservative nature of a CBF whose intent is to 
\emph{guarantee} the satisfaction of our original constraints. 
In order to explicitly show how an infeasible case may lead to a constraint violation and how this can be alleviated by the event-triggered and self-triggered schemes, we simulated 12 CAVs in the merging framework of Fig. \ref{fig:merging} with the exact same parameter settings as before and with $S=[0.5,1.5]$ in the event-triggered scheme, $T_{\max}=1$ in the self-triggered scheme and $\beta = 5$. Figure \ref{fig:rear_end} shows the values of the rear-end safety constraint over time. One can see that the satisfaction of safety constraints is always guaranteed with the event-triggered and self-triggered approach as there is no infeasible case and the value of the constraint $b_1(\textbf{x}(t)))$ is well above zero. In contrast, we see a clear violation of the constraint in the time-driven scheme in the cases of CAVs 8 depicted by the blue line.

{\bf Robustness}.
We have investigated the robustness of both schemes with respect to different forms of uncertainty, such as modeling and computational errors, by adding two noise terms to the vehicle dynamics: $\dot{x}_{i}(t) = v_{i}(t)+w_1(t)$, $\dot{v}_{i}(t) = u_{i}(t)+w_2(t)$,
where $w_1(t),w_2(t)$ denote two random processes defined in an appropriate probability space which, in our simulation, are set to be uniformly distributed over $[-2,2]$ and $[-0.2,0.2]$, respectively. We repeated the prior simulation experiment with added noise and results shown in Figs. \ref{fig:rear_end_noisy} and \ref{fig:Lateral_noisy}. We can see that the event-triggered and self-triggered schemes with almost similar performance because of their conservativeness keep the functions well away from the unsafe region (below $0$)
in contrast to the time-driven approach where we observe constraint violations due to noise, e.g., CAV 8 in Fig. \ref{fig:rear_end_noisy} and CAVs 3, 4, and, 9 in Fig. \ref{fig:Lateral_noisy}.

\begin{figure}[t]
\centering
\includegraphics[scale=0.42]{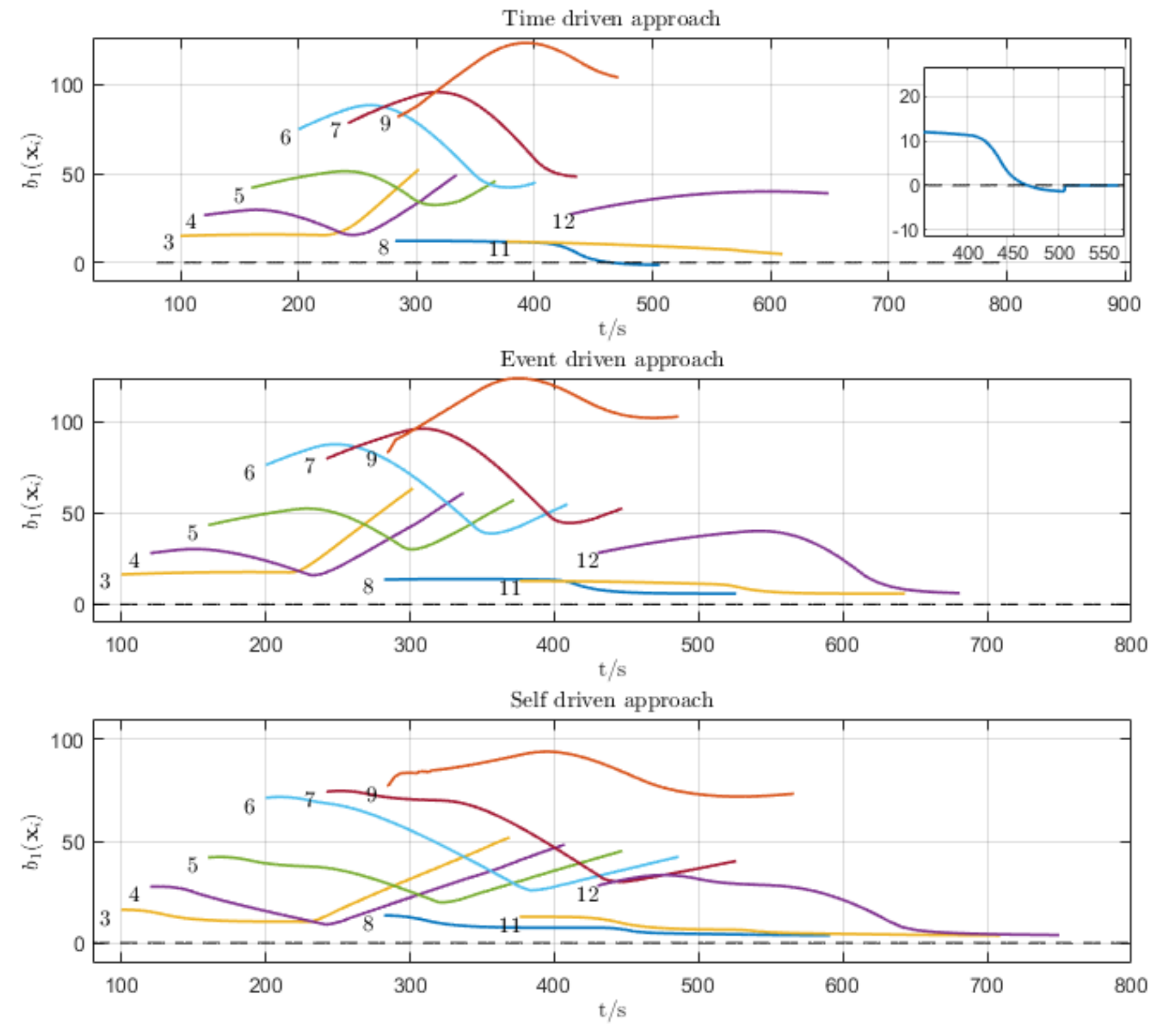} \caption{The variation of rear-end safety constraints for the time-driven, event-triggered and self-triggered approaches.}
\label{fig:rear_end}
\end{figure}

\begin{figure}[t]
\centering
\includegraphics[scale=0.45]{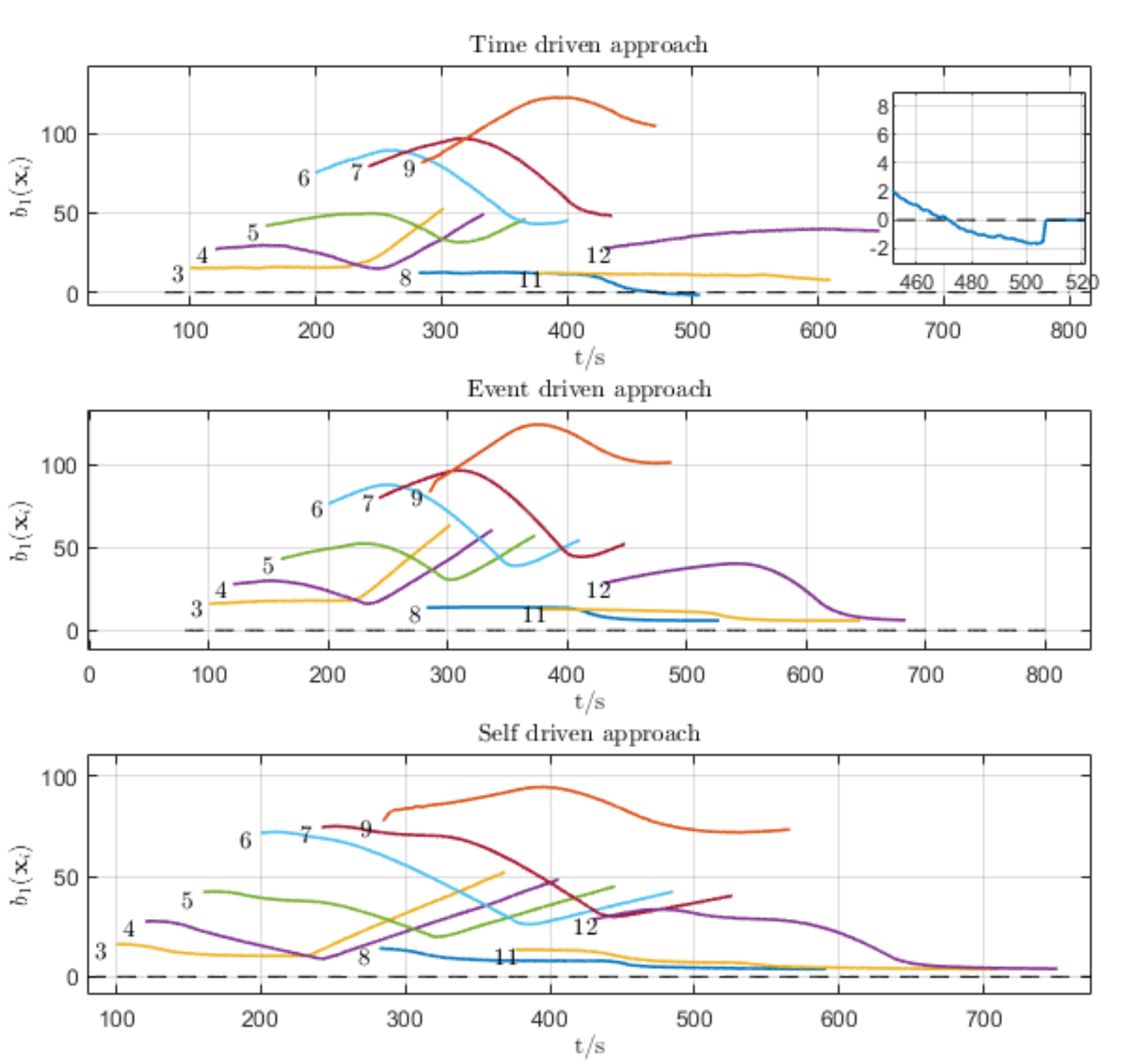} \caption{The variation of rear-end safety constraints for the time-driven, event-triggered, and self-triggered approaches in the presence of noise. Note the constraint violation under time-driven control.}%
\label{fig:rear_end_noisy}
\end{figure}

\begin{figure}[t]
\centering
\includegraphics[scale=0.45]{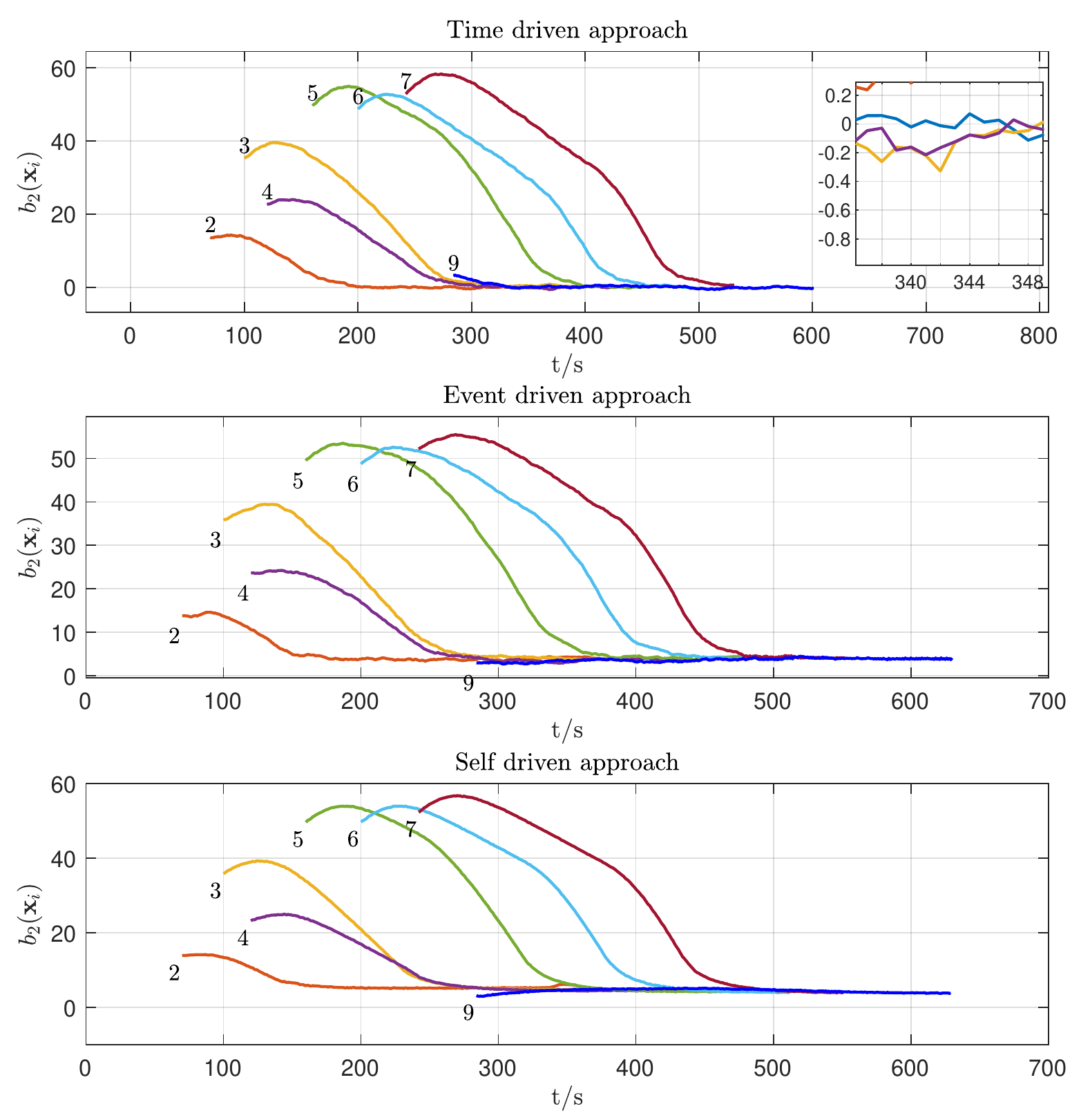} \caption{The variation of safe merging constraints for the time-driven, event-triggered and self-triggered approaches in the presence of noise. Note the constraint violations under time-driven control.}%
\label{fig:Lateral_noisy}
\end{figure}


\section{CONCLUSIONS}
The problem of controlling CAVs in conflict areas of a traffic network subject to hard safety constraints can be solved through a combination of tractable optimal control problems and the use of CBFs. These solutions can be derived by discretizing time and solving a sequence of QPs. However, the feasibility of each QP cannot be guaranteed over every time step. When this is due to the lack of a sufficiently high control update rate, we have shown that this problem can be alleviated through either an event-triggered scheme or a self-triggered scheme, while at the same time reducing the need for communication among CAVs, thus lowering computational costs and the chance of security threats. 
Ongoing work is targeted at eliminating all possible infeasibilities through the use of sufficient conditions based on the work in \cite{XIAO2022inf} added to the QPs, leading to complete solutions of CAV control problems with full safety constraint guarantees. 

\label{sec:conclude}





\end{document}